%% file: Main.tex
\documentclass[12pt, a4paper]{article}

\usepackage[T1]{fontenc}
\usepackage[utf8]{inputenc}
\usepackage{lmodern}
\usepackage{titlesec}

\usepackage[a4paper, margin=2.5cm]{geometry}
\usepackage{setspace}
\usepackage{amsmath, amssymb, amsthm}
\usepackage{mathtools}

\newtheorem{theorem}{Theorem}[section]
\newtheorem{proposition}[theorem]{Proposition}

\newtheorem{definition}[theorem]{Definition}
\newtheorem{remark}[theorem]{Remark}

\usepackage{graphicx}
\usepackage{booktabs}
\usepackage{caption}
\usepackage{subcaption}
\usepackage{float}
\usepackage{algpseudocode}
\usepackage{algorithm}
\usepackage{float}      
\usepackage{placeins}   
\usepackage{tikz}
\usetikzlibrary{positioning, arrows.meta, calc}

\usepackage[round]{natbib}
\usepackage[colorlinks=true, linkcolor=blue,
            citecolor=blue, urlcolor=blue]{hyperref}

\usepackage{xcolor}
\usepackage{enumitem}
\usepackage{microtype}

\title{ 
  \Large Bridging Stochastic Control and Deep Hedging: Structural Priors for No-Transaction Band Networks
}
\author{
  Jules Arzel\textsuperscript{1,2}, Noureddine Lehdili\textsuperscript{2}\\[6pt]
  \small \textsuperscript{1} Université Paris-Dauphine PSL, 
  \small \textsuperscript{2} Natixis CIB
}
\date{March 2026}

\begin{document}

\maketitle

\begin{abstract}
This paper studies the problem of hedging and pricing a European call
option under proportional transaction costs, from two complementary
perspectives. We first derive the optimal
hedging strategy under CARA utility, following the stochastic control
framework of \citet{DavisPanasZariphopoulou1993}, characterising the
no-transaction band via the Hamilton--Jacobi--Bellman Quasi-Variational
Inequality (HJBQVI) and the Whalley--Wilmott asymptotic approximation.
We then adopt a deep hedging approach, proposing two architectures that
build on the No-Transaction Band Network of \citet{ImakiImajoItoMinamiNakagawa2021}:
NTBN-$\Delta$, which makes delta-centring explicit, and WW-NTBN, which incorporates the
Whalley--Wilmott formula as a structural prior on the bandwidth and
replaces the hard clamp with a differentiable soft clamp. Numerical
experiments show that WW-NTBN converges faster, matches the stochastic
control no-transaction bands more closely, and generalises well across
transaction cost regimes. We further apply both frameworks to the bull
call spread, documenting the breakdown of price linearity under
transaction costs.
\\
\\
\textbf{Key Words} : Stochastic control, Deep Learning, Option Hedging, Transaction costs
\end{abstract}


\input{introduction}        
\input{stochastic_control}  
\input{callspread}          
\input{deep_hedging}        
\input{results}             
\input{conclusion}          

\clearpage
\begin{center}
  {\large\bfseries Appendices}
\end{center}
\vspace{1em}

\titleformat{\section}[block]
  {\normalfont\normalsize\bfseries}
  {\appendixname~\thesection.}{0.5em}{}

\titleformat{\subsection}[block]
  {\normalfont\small\bfseries}
  {\thesubsection}{0.5em}{}

\appendix
\input{appendix}            

\newpage
\nocite{*}
\bibliographystyle{plainnat}
\bibliography{references}

\end{document}

%% file: Introduction.tex
\section{Introduction}
\label{sec:introduction}

\subsection{Motivation}

The Black-Scholes model \citep{BlackScholes1973} provides the standard
framework for option pricing and hedging. Its central result, that a
continuously rebalanced portfolio in the underlying asset perfectly
replicates any European payoff, underpins much of modern derivatives
theory. Yet continuous rebalancing is a mathematical idealisation. Every
transaction in the underlying asset incurs a cost: bid-ask spreads,
exchange fees, market impact, etc. Even modest proportional costs,
applied to the rapid rebalancing that perfect replication demands,
accumulate into charges that erode the value of any hedging strategy.
The assumptions of Black-Scholes are therefore not just imprecise but
qualitatively misleading in a market where transaction costs are present.

The consequences go further than a simple reduction in profitability.
Transaction costs fundamentally change the structure of optimal
strategies. Under frictions, continuous delta-hedging is no longer
optimal; it is dominated by strategies that trade only when the portfolio
drifts sufficiently far from the target position. This inactivity is not
a failure of the hedger but a rational response: the cost of rebalancing
must be weighed against the utility loss from holding a suboptimal
position. The resulting no-transaction band, a region of the state space
where it is optimal to do nothing, is the defining structural feature of
optimal hedging under proportional transaction costs and one of the
central objects of study in this paper.

From a theoretical perspective, the no-transaction band characterises
the geometry of the optimal control problem and connects the hedging
strategy to indifference pricing in a precise way. From a practical
perspective, it determines how often and how aggressively a hedger
should trade, directly controlling the running cost of maintaining a
derivatives book.

\subsection{Related Work}

The problem of option replication under proportional transaction costs
was first studied systematically by \citet{Leland1985}, who proposed
adjusting the Black-Scholes volatility upward to absorb the expected cost
of discrete rebalancing. While widely used in practice, this approach is
heuristic: it does not arise from a utility-maximisation argument and
does not characterise the optimal trading policy.

A rigorous treatment using stochastic control was initiated by
\citet{HodgesNeuberger1989}, who introduced the utility indifference
pricing approach and established the existence of a no-transaction region.
This framework was subsequently formalised by
\citet{DavisPanasZariphopoulou1993}, who formulated the problem as a
singular stochastic control problem, derived the associated
Hamilton-Jacobi-Bellman quasi-variational inequality (HJBQVI), and
provided the first analytical characterisation of the no-transaction band
boundaries. Related work by \citet{BoyleVorst1992} and
\citet{EdirisingheNaikUppal1993} addressed discrete-time replication with
transaction costs from a different angle, deriving bounds on replication
costs without a utility framework. The connection between transaction costs
and indifference pricing is surveyed in \citet{HendersonHobson2004}.

A key simplification was achieved by \citet{WhalleyWilmott1997}, who
derived an explicit asymptotic formula for the width of the
no-transaction band in the limit of small transaction costs. Their result
expresses the half-bandwidth as a function of the Black-Scholes gamma,
the risk aversion and the transaction cost level, giving an analytically
tractable approximation that remains accurate for small to moderate
costs. Related asymptotic results for small transaction costs were
obtained by \citet{KallsenMuhleKarbe2015} in a more general
semimartingale setting. This formula plays a central role in our work.

The deep hedging framework of
\citet{BuehlerGononTeichmannWood2019} recast option hedging as a
learning problem. A neural network is trained end-to-end to minimise a
convex risk measure over simulated market trajectories, without requiring
an explicit model for the price dynamics. This approach handles
transaction costs, path-dependent payoffs, and portfolio-level
constraints within a single unified framework, and has been shown to
produce competitive hedging strategies across a range of market
conditions. Extensions to rough volatility dynamics were studied by
\citet{HorvathTeichmannZuric2021}.

Within this framework, \citet{ImakiImajoItoMinamiNakagawa2021} proposed
the No-Transaction Band Network (NTBN), which outputs the lower and
upper boundaries of a no-transaction band at each time step rather than
a direct hedge position. The hedge is updated only when the previous
position falls outside this band, explicitly encoding the structural
feature known from stochastic control theory. This design improves
training stability and performance in high-friction regimes. In their
experimental implementation, \citeauthor{ImakiImajoItoMinamiNakagawa2021}
centre the band around the Black-Scholes delta by adding and subtracting
the network outputs from $\Delta_{\mathrm{BS}}$, though they do not
present this as an explicit architectural choice or provide theoretical
motivation for it. Beyond delta-centring, significant structural
information available from the Whalley--Wilmott approximation, in
particular the correct scaling of the bandwidth with transaction cost and
risk aversion, is not incorporated into the architecture.

\subsection{Contributions}

This paper makes the following contributions.

\paragraph{Unified theoretical exposition.}
We provide a self-contained derivation of the optimal hedging problem
under proportional transaction costs, following the stochastic control
framework of \citet{DavisPanasZariphopoulou1993} specialised to CARA
utility. We derive the HJBQVI, characterise the no-transaction band, and
present the Whalley--Wilmott asymptotic expansion with a sketch of the
key derivation steps in the appendix. While the individual results are
established in the literature, we assemble them into a unified treatment
that makes the connection to the deep hedging architectures below
explicit. We also treat the bull call spread as a case study illustrating
the breakdown of price linearity under transaction costs, a consequence
of the no-transaction band structure that is often overlooked in
practice.

\paragraph{NTBN-$\Delta$: making delta-centring explicit.}
The original NTBN of \citet{ImakiImajoItoMinamiNakagawa2021} centres its band around $\Delta_{\mathrm{BS}}$ in practice, but treats
this as an unmarked implementation detail rather than an architectural
principle. We formalise this choice as an explicit design decision,
motivated directly by the stochastic control analysis: the frictionless
optimal holding \eqref{eq:opt_hold_option} shows that the natural centre of the no-transaction
band is the Black-Scholes delta. The resulting architecture, NTBN-$\Delta$,
makes the delta-centring transparent, provides a clear theoretical
justification for it, and serves as the principled baseline on which
WW-NTBN is built.

\paragraph{WW-NTBN: a Whalley--Wilmott guided architecture.}
Building on NTBN-$\Delta$, we propose the Whalley--Wilmott
No-Transaction Band Network (WW-NTBN), which incorporates two further
improvements. First, the initial half-bandwidth is set to the
Whalley--Wilmott analytical approximation
\begin{equation}
  h_{\mathrm{WW}} = \left(\frac{3\lambda S\,
  \Gamma^2_{\mathrm{BS}}}{2\gamma}\right)^{1/3},
\end{equation}
and the network learns only a residual correction on top of this prior.
This gives the network the correct scaling with transaction cost
$\lambda$ and risk aversion $\gamma$ at initialisation, accelerating
convergence and improving generalisation across parameter regimes.
Second, the hard clamp used to enforce the band is replaced by a
differentiable soft clamp
\begin{equation}
  \mathrm{softclamp}(x, \ell, u)
  = \frac{\ell + u}{2}
  + \frac{u - \ell}{2}\,\tanh\!\left(\beta\,
  \frac{x - \frac{\ell+u}{2}}{\frac{u-\ell}{2}}\right),
\end{equation}
which provides non-zero gradients even when the previous hedge lies
outside the band, removing a source of vanishing gradients present in
the standard NTBN. We note that \citet{ImakiImajoItoMinamiNakagawa2021}
themselves flag the vanishing gradient issue in the clamped region as a
training tip, but do not propose a differentiable remedy; the soft clamp
addresses this directly.

We note that WW-NTBN partly sacrifices the model-agnostic character of
deep hedging. By incorporating the Black-Scholes delta and gamma as
structural inputs, the architecture implicitly assumes log-normal
dynamics for the underlying asset. This is a deliberate trade-off: in
exchange for this additional model assumption, WW-NTBN achieves faster
convergence, lower entropic risk at convergence, and no-transaction bands
that quantitatively agree with the stochastic control solution across a
range of transaction cost levels. In settings where the log-normal
assumption is inappropriate, NTBN-$\Delta$ remains the better choice.

\paragraph{Comprehensive numerical comparison.}
We conduct a systematic comparison of the Black-Scholes benchmark, the
stochastic control dynamic programming solution, the deep hedging MLP
baseline, NTBN-$\Delta$, and WW-NTBN under identical parameters and
across a range of transaction cost levels. We compare writer and buyer
indifference prices, no-transaction band geometry, terminal P\&L
distributions, and trade metrics. We further apply all methods to the
bull call spread, documenting the divergence between joint and naive
hedging strategies and its implications for pricing.

\subsection{Organisation}

Section~\ref{sec:stochastic_control} develops the stochastic control
framework: we introduce the optimisation problem, derive the HJBQVI,
characterise the no-transaction band, and present the Whalley--Wilmott
asymptotic formula together with the numerical dynamic programming
scheme. Section~\ref{sec:callspread} treats the bull call spread as an
application, establishing the non-linearity of prices and hedging
strategies under transaction costs. Section~\ref{sec:deep_hedging}
introduces the deep hedging framework and describes the three network
architectures: the MLP baseline, NTBN-$\Delta$, and WW-NTBN.
Section~\ref{sec:results} presents the numerical comparison.
Section~\ref{sec:conclusion} concludes. The Appendix collects the
omitted proofs.

\bigskip
\noindent\textbf{Notation.}
Throughout, $S_t$ denotes the price of the risky asset at time $t$,
$W_t$ the investor's wealth, $y_t$ the number of shares held, and
$\lambda > 0$ the proportional transaction cost rate, assumed symmetric.
We write $\sigma$ for volatility, $\mu$ for drift, $r$ for the risk-free
rate, $K$ for the option strike, and $T$ for maturity. The risk-aversion
parameter of the CARA utility $u(x) = 1 - e^{-\gamma x}$ is $\gamma >
0$. We write $\Phi$ for the standard normal CDF, $\phi$ for its density,
and $\Delta_{\mathrm{BS}}$, $\Gamma_{\mathrm{BS}}$ for the Black-Scholes
delta and gamma of the option under consideration.

%% file: stochastic_control.tex

\section{Optimal Hedging via Stochastic Control}
\label{sec:stochastic_control}


\subsection{Market Model and the Frictionless Problem}
\label{subsec:frictionless}

\subsubsection{Price Dynamics and Portfolio Construction}

We work on a filtered probability space $(\Omega, \mathcal{F},
\{\mathcal{F}_t\}_{t \in [0,T]}, \mathbb{P})$ supporting a standard
Brownian motion $Z_t$. The market consists of two assets. The first is
a risky asset whose price $S_t$ follows a geometric Brownian motion,
\begin{equation}
  dS_t = \mu S_t \, dt + \sigma S_t \, dZ_t,
  \label{eq:gbm}
\end{equation}
where $\mu \in \mathbb{R}$ is the drift and $\sigma > 0$ the volatility.
The second is a risk-free bond with dynamics $dB_t = r B_t \, dt$, so
that one unit of bond at time $t$ is worth $e^{r(T-t)}$ at maturity.

The investor constructs a portfolio by holding $y_t$ shares of the risky
asset and allocating the remainder of wealth to the bond. We denote by
$W_t = y_tS_t + B_t$ the total portfolio wealth at time $t$, so the bond position has
value $W_t - y_t S_t$. We assume the portfolio is self-financing: no
external cash flows occur, and any purchase of shares is funded by
selling bonds and vice versa. The change in wealth over an infinitesimal
interval $dt$ is then
\begin{equation}
  dW_t = y_t \, dS_t + (W_t - y_t S_t) \frac{dB_t}{B_t}
       = y_t \, dS_t + r(W_t - y_t S_t) \, dt.
  \label{eq:self_financing_raw}
\end{equation}
Substituting \eqref{eq:gbm} and rearranging,
\begin{equation}
  dW_t = r W_t \, dt + y_t (\mu - r) S_t \, dt + y_t \sigma S_t \, dZ_t.
  \label{eq:wealth_dynamics}
\end{equation}
It is convenient to write $\pi_t = y_t S_t / W_t$ for the fraction of
wealth invested in the risky asset, giving
\begin{equation}
  dW_t = \bigl[r + \pi_t(\mu - r)\bigr] W_t \, dt
        + \pi_t \sigma W_t \, dZ_t.
  \label{eq:wealth_fraction}
\end{equation}
The quantity $\mu - r$ is the excess return of the risky asset over the
risk-free rate, often called the risk premium. When $\mu > r$, investing
in the risky asset generates a positive expected excess return at the
cost of introducing volatility. The investor's problem is to choose
$\pi_t$ to balance this trade-off optimally.

\subsubsection{The Optimal Investment Problem and the HJB Equation}

The investor has a utility function $u : \mathbb{R} \to \mathbb{R}$,
twice differentiable, strictly increasing, and strictly concave. The
goal is to maximise expected utility of terminal wealth,
\begin{equation}
  \sup_{\{\pi_t\}_{t \in [0,T]}} \mathbb{E}\bigl[u(W_T)\bigr].
  \label{eq:objective}
\end{equation}
The optimisation is over adapted processes $\pi_t$ satisfying suitable
integrability conditions. To solve \eqref{eq:objective} using dynamic
programming, we define the \emph{value function} as the maximum
achievable expected utility starting from wealth $W$ at time $t$,
\begin{equation}
  U(t, W) = \sup_{\{\pi_s\}_{s \in [t,T]}}
  \mathbb{E}_t\bigl[u(W_T)\bigr],
  \label{eq:value_function}
\end{equation}
where $\mathbb{E}_t[\,\cdot\,]$ denotes expectation conditional on
$\mathcal{F}_t$. The value function encodes the full solution to the
problem: once $U$ is known, the optimal control at any state $(t, W)$
can be recovered by maximising over $\pi$.

By the dynamic programming principle, $U$ satisfies the
\emph{Hamilton-Jacobi-Bellman equation}
\begin{equation}
  \frac{\partial U}{\partial t}
  + \max_{\pi} \,\mathcal{L}^\pi U \;=\; 0,
  \label{eq:hjb_operator}
\end{equation}
where $\mathcal{L}^\pi$ is the controlled infinitesimal generator of
$W_t$ under strategy $\pi$. Applying It\^{o}'s formula to $U(t, W_t)$
along the dynamics \eqref{eq:wealth_fraction} gives
\begin{equation}
  \mathcal{L}^\pi U \;=\;
  \bigl[r + \pi(\mu - r)\bigr] W \, U_W
  + \tfrac{1}{2} \pi^2 \sigma^2 W^2 \, U_{WW},
\end{equation}
so the HJB equation reads explicitly
\begin{equation}
  U_t + \max_{\pi} \left\{
    \bigl[r + \pi(\mu-r)\bigr] W \, U_W
    + \tfrac{1}{2} \pi^2 \sigma^2 W^2 \, U_{WW}
  \right\} = 0,
  \label{eq:hjb_frictionless}
\end{equation}
with terminal condition $U(T, W) = u(W)$. Since $U_{WW} < 0$ by
concavity, the maximand is strictly concave in $\pi$ and the
first-order condition yields the \emph{optimal Merton fraction}
\citep{Merton1969}
\begin{equation}
  \pi^* = -\frac{(\mu - r) \, U_W}{\sigma^2 W \, U_{WW}}.
  \label{eq:merton}
\end{equation}
The quantity $-U_W / (W U_{WW})$ is the reciprocal of the Arrow-Pratt
measure of relative risk aversion. Equation \eqref{eq:merton} thus
states that the optimal risky allocation is proportional to the risk
premium $\mu - r$ and inversely proportional to risk aversion and
variance, a result that holds for any utility function admitting a
smooth value function.

\subsubsection{Solution under CARA Utility}

For the CARA utility $u(W) = 1 - e^{-\gamma W}$, the HJB equation
\eqref{eq:hjb_frictionless} admits an explicit solution. We make the
ansatz
\begin{equation}
  U(t, W) = 1 - e^{-\gamma \varphi(t) W + \psi(t)},
\end{equation}
substitute into \eqref{eq:hjb_frictionless}, and match coefficients.
This yields the system $\varphi'(t) + r\varphi(t) = 0$ and
$\psi'(t) = (\mu-r)^2/(2\sigma^2)$, with boundary conditions
$\varphi(T) = 1$ and $\psi(T) = 0$. Solving gives
\begin{equation}
  \varphi(t) = e^{r(T-t)}, \qquad
  \psi(t) = \frac{(\mu-r)^2}{2\sigma^2}(t - T),
\end{equation}
so that the value function is
\begin{equation}
  U(t, W) = 1 - \exp\!\left(-\gamma e^{r(T-t)} W
  + \frac{(\mu-r)^2}{2\sigma^2}(t-T)\right).
  \label{eq:cara_value}
\end{equation}
The full derivation is given in Appendix~\ref{app:cara_derivation}.
Substituting \eqref{eq:cara_value} into \eqref{eq:merton}, the optimal
fraction and the corresponding optimal number of shares are
\begin{equation}
  \pi^* = \frac{\mu - r}{\sigma^2 \gamma W e^{r(T-t)}},
  \qquad
  y^* = \frac{\mu - r}{\sigma^2 \gamma S_t e^{r(T-t)}}.
  \label{eq:cara_merton}
\end{equation}
Note that $y^*$ is independent of wealth $W$: under CARA utility, the
optimal number of shares held does not depend on how rich the investor
is, only on the current stock price and time to maturity. This is a
distinctive feature of exponential utility that will prove crucial when
we introduce transaction costs.

\subsubsection{Adding a Short Option Position}

We now suppose the investor is also short one European call option with
strike $K$ and maturity $T$, with payoff $\varphi(S_T) = (S_T - K)^+$.
The option is sold at time $0$ for a premium $p$, which is added to the
initial wealth. At maturity, the investor must deliver the payoff, so
the effective terminal wealth becomes
\begin{equation}
  W_T^{\mathrm{option}} = W_T - (S_T - K)^+.
  \label{eq:terminal_wealth_option}
\end{equation}
The portfolio dynamics \eqref{eq:wealth_dynamics} are unchanged: the
option position does not require any cash flow before maturity, so the
self-financing condition still holds. What changes is the terminal
condition of the optimisation problem.

Because $S_T$ appears in the terminal wealth \eqref{eq:terminal_wealth_option},
the stock price $S_t$ now enters the value function as a state variable.
The value function becomes $U(t, S, W)$, and applying Ito's formula now
produces cross terms between the $S$ and $W$ dynamics. The HJB equation
becomes
\begin{equation}
  \frac{\partial U}{\partial t}
  + \mu S U_S + \frac{1}{2}\sigma^2 S^2 U_{SS}
  + \max_\pi \left\{
      \bigl[r + \pi(\mu-r)\bigr] W U_W
      + \frac{1}{2}\pi^2\sigma^2 W^2 U_{WW}
      + \pi \sigma^2 S W U_{SW}
    \right\} = 0,
  \label{eq:hjb_option}
\end{equation}
with terminal condition $U(T, S, W) = u(W - (S-K)^+)$.
The additional term $\pi \sigma^2 S W U_{SW}$ arises because the Brownian
motion driving $S$ and $W$ is the same: the investor's wealth and the
option's value are exposed to the same source of risk. The first-order
condition now gives
\begin{equation}
  \pi^* = -\frac{(\mu-r) U_W}{\sigma^2 W U_{WW}}
           - \frac{S \, U_{SW}}{W \, U_{WW}},
  \label{eq:pi_option}
\end{equation}
which differs from the pure Merton solution \eqref{eq:merton} by the
second term, which reflects the hedging demand induced by the option.

A standard verification argument shows that, under CARA utility, the
value function with the option is related to the one without by
\begin{equation}
  U^{\mathrm{option}}(t, S, W) = U(t, W - f(t, S)),
  \label{eq:bs_duality}
\end{equation}
where $f(t, S)$ is the Black-Scholes price of the call
\citep{DavisPanasZariphopoulou1993}. In other words, the effect of the
short option position is simply to reduce the effective wealth by the
option's Black-Scholes value. This is the pricing-hedging duality of the
frictionless case: the fair price of the option is exactly $f(t, S)$.

Using \eqref{eq:bs_duality}, we can compute $U_{SW}$ and $U_{WW}$
explicitly in terms of $U$ and the Black-Scholes sensitivities. The
key step is that $\partial\tilde{W}/\partial S = -\Delta_{\mathrm{BS}}$
where $\tilde{W} = W - f(t,S)$, which gives
$U_{SW} = -\Delta_{\mathrm{BS}} \cdot U_{WW}$. Substituting into
\eqref{eq:pi_option} and using the explicit CARA value function, the
optimal number of shares held is
\begin{equation}
  y^* = \Delta_{\mathrm{BS}}(t, S) + \frac{\mu - r}{\sigma^2 \gamma S e^{r(T-t)}},
  \label{eq:opt_hold_option}
\end{equation}
where $\Delta_{\mathrm{BS}} = \partial f / \partial S$ is the
Black-Scholes delta of the option. This decomposition has a clean
interpretation. The first term, $\Delta_{\mathrm{BS}}$, is the standard
delta hedge: it replicates the option's exposure to moves in $S$ and
would be the entire solution in a risk-neutral world. The second term,
$(\mu-r)/(\sigma^2 \gamma S e^{r(T-t)})$, is exactly the Merton
speculative position the investor would hold in the absence of any
option \eqref{eq:cara_merton}. The two components are fully additive:
under CARA utility, the speculative demand depends only on the stock
price and time to maturity, not on the investor's wealth or other
positions, so the option introduces a pure hedging demand on top of
which the investor adds the same speculation as without the option.

When risk aversion $\gamma \to \infty$, the speculative term vanishes
and the investor holds the pure delta hedge $y^* \to
\Delta_{\mathrm{BS}}$: an infinitely risk-averse investor does not
speculate and hedges the option perfectly. As $\gamma \to 0$ the
investor becomes risk-neutral and the speculative position dominates,
with the delta hedge becoming relatively small.

With the frictionless case fully characterised, we now introduce
transaction costs, which fundamentally change the structure of the
optimal strategy.


\subsection{Transaction Costs: Problem Formulation}
\label{subsec:tc_formulation}

We now introduce proportional transaction costs at rate $\lambda > 0$,
symmetric for buying and selling. In this section we decompose the
portfolio into two separate accounts: $y_t$ shares of the risky asset
and a cash balance $X_t$, so that total wealth is
\begin{equation}
  W_t = X_t + y_t S_t.
  \label{eq:total_wealth_tc}
\end{equation}
Unlike the frictionless case, trading is no longer costless and the
portfolio is not self-financing: every transaction reduces total wealth
by $\lambda S_t$ per unit traded.\\
Trading is governed by two non-negative adapted processes $l_t$ (rate
of purchase) and $m_t$ (rate of sale), giving
\begin{align}
  dy_t  &= l_t \, dt - m_t \, dt, \label{eq:dy} \\
  dX_t  &= r X_t \, dt - (1+\lambda) S_t l_t \, dt
           + (1-\lambda) S_t m_t \, dt. \label{eq:dX}
\end{align}
Each purchase of $l_t \, dt$ shares costs $(1+\lambda)S_t l_t \, dt$
from the cash account (the factor $1+\lambda$ reflects the transaction
cost), while each sale of $m_t \, dt$ shares generates
$(1-\lambda)S_t m_t \, dt$ in cash.

\begin{remark}
To verify consistency with the frictionless case, set $\lambda = 0$ in
\eqref{eq:dy}--\eqref{eq:dX}. Total wealth $W_t = X_t + y_t S_t$
then satisfies
\begin{align}
  dW_t &= dX_t + y_t \, dS_t + S_t \, dy_t \notag \\
       &= \bigl(rX_t \, dt - S_t \, dy_t\bigr)
         + y_t(\mu S_t \, dt + \sigma S_t \, dZ_t)
         + S_t \, dy_t \notag \\
       &= r(X_t + y_t S_t) \, dt
         + y_t S_t (\mu - r) \, dt
         + y_t \sigma S_t \, dZ_t \notag \\
       &= r W_t \, dt
         + \pi_t(\mu - r) W_t \, dt
         + \pi_t \sigma W_t \, dZ_t,
\end{align}
where $\pi_t = y_t S_t / W_t$. This recovers \eqref{eq:wealth_fraction}
exactly, confirming that the transaction cost formulation is a consistent
extension of the frictionless one.
\end{remark}

The investor is short a European option with payoff $\varphi(S_T)$. The
optimisation problem is
\begin{equation}
  \sup_{(l,m) \geq 0} \,
  \mathbb{E}\bigl[u\bigl(X_T + y_T S_T - \lambda S_T |y_T|
  - \varphi(S_T)\bigr)\bigr],
  \label{eq:opt_problem}
\end{equation}
where the liquidation term $-\lambda S_T |y_T|$ accounts for the cost
of unwinding the remaining position at maturity. The associated value
function is
\begin{equation}
  U(t, S, y, X) = \sup_{(l,m) \geq 0}
  \mathbb{E}_{t,S,y,X}\bigl[u\bigl(X_T + y_T S_T
  - \lambda S_T |y_T| - \varphi(S_T)\bigr)\bigr].
  \label{eq:value_function_tc}
\end{equation}


\subsection{HJB Quasi-Variational Inequality and the No-Transaction Band}
\label{subsec:hjb_tc}

Applying It\^{o}'s formula to $U(t, S_t, y_t, X_t)$ along the dynamics
\eqref{eq:dy}--\eqref{eq:dX}, and noting that $dy_t$ and the controlled
part of $dX_t$ are pure finite-variation processes, the only
second-order term is $(dS_t)^2 = \sigma^2 S^2 \, dt$. Taking
expectations, the drift of $U$ must vanish at the optimum by the
dynamic programming principle, giving
\begin{equation}
  U_t + \mu S U_S + \frac{1}{2}\sigma^2 S^2 U_{SS} + rX U_X
  + l\bigl(U_y - (1+\lambda) S \, U_X\bigr)
  + m\bigl(-U_y + (1-\lambda) S \, U_X\bigr) = 0.
  \label{eq:drift_full}
\end{equation}
The controls $l, m \geq 0$ appear linearly in \eqref{eq:drift_full}.
Since they are unconstrained in magnitude, the only way
\eqref{eq:drift_full} can hold without driving the drift to $\pm\infty$
is if each coefficient is non-positive whenever the corresponding
control is active. This forces the problem into a quasi-variational inequality
structure. Define
\begin{align}
  A &= U_y - (1+\lambda) S \, U_X, \label{eq:A} \\
  B &= -U_y + (1-\lambda) S \, U_X, \label{eq:B} \\
  C &= -U_t - \mu S U_S
       - \tfrac{1}{2}\sigma^2 S^2 U_{SS} - rX U_X. \label{eq:C}
\end{align}
The Hamilton-Jacobi-Bellman equation then takes the form of the
quasi-variational inequality
\begin{equation}
  \max\bigl\{A,\, B,\, C\bigr\} = 0,
  \label{eq:vi}
\end{equation}
with terminal condition
\begin{equation}
  U(T, S, y, X) = u\bigl(X + yS - \lambda S|y| - \varphi(S)\bigr).
  \label{eq:terminal}
\end{equation}

\begin{definition}[Trading regions]
\label{def:regions}
The state space $(S, y, X)$ is partitioned into three regions according
to which term in \eqref{eq:vi} is active.
\begin{itemize}
  \item \emph{No-transaction region} $\mathcal{NT}$: $C = 0$ and
    $A, B \leq 0$. No trading occurs and $U$ satisfies the PDE
    $U_t + \mu S U_S + \frac{1}{2}\sigma^2 S^2 U_{SS} + rX U_X = 0$.
    The condition $A, B \leq 0$ is equivalent to
    \begin{equation}
      (1-\lambda)S \;\leq\; \frac{U_y}{U_X} \;\leq\; (1+\lambda)S.
      \label{eq:nt_condition}
    \end{equation}
    \item \emph{Buy region} $\mathcal{B}$: $A = 0$ and $B, C \leq 0$.
      The marginal value of a share exceeds the ask price
      ($U_y/U_X > (1+\lambda)S$), so the investor purchases shares
      continuously, maintaining $U_y/U_X = (1+\lambda)S$,
      until re-entering $\mathcal{NT}$ from below.
    
    \item \emph{Sell region} $\mathcal{S}$: $B = 0$ and $A, C \leq 0$.
      The marginal value of a share falls below the bid price
      ($U_y/U_X < (1-\lambda)S$), so the investor sells shares
      continuously, maintaining $U_y/U_X = (1-\lambda)S$,
      until re-entering $\mathcal{NT}$ from above.
\end{itemize}
\end{definition}

The ratio $U_y / U_X$ is the marginal value of an additional share
expressed in units of cash. The no-transaction condition
\eqref{eq:nt_condition} states that trading is suboptimal precisely
when this marginal value lies within the bid-ask spread
$[(1-\lambda)S,\, (1+\lambda)S]$: the utility gain from rebalancing
is smaller than the cost it incurs. When the marginal value of a share
exceeds the ask price $(1+\lambda)S$, the investor holds too few shares
and buys; when it falls below the bid price $(1-\lambda)S$, the investor
holds too many and sells. Simultaneous buying and selling ($l > 0$ and
$m > 0$) is never optimal, as it would waste transaction costs without
improving the portfolio.


\subsection{CARA Utility and Dimension Reduction}
\label{subsec:cara}

For general utility functions, the value function $U$ depends on all
four variables $(t, S, y, X)$, making direct numerical solution costly.
The CARA utility $u(x) = 1 - e^{-\gamma x}$ admits a factorisation
that eliminates the cash dimension entirely.

\begin{proposition}[Dimension reduction]
\label{prop:dim_reduction}
Under CARA utility, the value function takes the form
\begin{equation}
  U(t, S, y, X) = 1 - e^{-\gamma X / \delta(t,T)} \, Q(t, y, S),
  \label{eq:factorisation}
\end{equation}
where $\delta(t, T) = e^{-r(T-t)}$ is the discount factor and
$Q(t, y, S)$ solves the reduced quasi-variational inequality
\begin{equation}
  \min\left\{
    Q_y + \frac{\gamma(1+\lambda)S}{\delta(t,T)} Q, \;
    -Q_y - \frac{\gamma(1-\lambda)S}{\delta(t,T)} Q, \;
    -Q_t - \mu S Q_S - \tfrac{1}{2}\sigma^2 S^2 Q_{SS}
  \right\} = 0,
  \label{eq:Q_vi}
\end{equation}
with terminal condition
$Q(T, y, S) = \exp\bigl(-\gamma(yS - \lambda S|y| - \varphi(S))\bigr)$.
\end{proposition}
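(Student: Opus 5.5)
The plan is to separate the cash variable from $(t,S,y)$ using the linearity of the cash dynamics, and then substitute the resulting ansatz into the quasi-variational inequality \eqref{eq:vi}.

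\textbf{Step 1: separation of $X$.} Between trades, \eqref{eq:dX} gives $dX_t = rX_t\,dt$ plus control terms that do not depend on $X$. Hence for any admissible $(l,m)$,
\[
X_T = \frac{X}{\delta(t,T)} + G_T,
\]
where $G_T$ is determined by $(S,y)$ and the controls alone. Since $u(x)=1-e^{-\gamma x}$, this yields
\[
\mathbb{E}_{t,S,y,X}\bigl[u(\cdot)\bigr] = 1 - e^{-\gamma X/\delta(t,T)}\,\mathbb{E}\bigl[e^{-\gamma(G_T + y_TS_T-\lambda S_T|y_T|-\varphi(S_T))}\bigr].
\]
The factor $e^{-\gamma X/\delta}$ is positive and independent of the control. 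Taking the supremum over controls is therefore the same as taking the infimum of the expectation. We define $Q(t,y,S)$ as that infimum, which gives \eqref{eq:factorisation}. At $t=T$ we have $\delta=1$ and $G_T=0$, so the terminal condition \eqref{eq:terminal} becomes $Q(T,y,S)=\exp(-\gamma(yS-\lambda S|y|-\varphi(S)))$.

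\textbf{Step 2: transfer of the QVI.} We compute the derivatives of $U = 1-e^{-\gamma X/\delta}Q$, using $\partial_t(1/\delta) = -r/\delta$:
\[
U_X = \frac{\gamma}{\delta}e^{-\gamma X/\delta}Q,\qquad U_y=-e^{-\gamma X/\delta}Q_y,\qquad U_S,\,U_{SS}\ \text{analogously},\qquad U_t = -e^{-\gamma X/\delta}\Bigl(Q_t+\frac{\gamma r X}{\delta}Q\Bigr).
\]
Substituting these into \eqref{eq:A}--\eqref{eq:C} gives
\[
A = -e^{-\gamma X/\delta}\Bigl(Q_y+\frac{\gamma(1+\lambda)S}{\delta}Q\Bigr),\qquad B = -e^{-\gamma X/\delta}\Bigl(-Q_y-\frac{\gamma(1-\lambda)S}{\delta}Q\Bigr).
\]
In $C$, the $rXU_X$ term exactly cancels the $\gamma rXQ/\delta$ contribution coming from $U_t$, which leaves
\[
C = -e^{-\gamma X/\delta}\bigl(-Q_t-\mu SQ_S-\tfrac12\sigma^2S^2Q_{SS}\bigr).
\]
Since $-e^{-\gamma X/\delta}<0$, the condition $\max\{A,B,C\}=0$ is equivalent to the minimum of the bracketed expressions being zero, which is exactly \eqref{eq:Q_vi}.

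\textbf{Main obstacle.} The main difficulty is rigour in Step 1, where controls may be singular rather than absolutely continuous. I would handle it by noting that the explicit solution $X_T = X/\delta + \int_t^T e^{r(T-s)}\bigl(-(1+\lambda)S_s\,dL_s+(1-\lambda)S_s\,dM_s\bigr)$ remains valid pathwise for cumulative purchase and sale processes $L,M$. Step 2 is then a formal verification assuming that $Q$ is smooth, or can be read in the viscosity sense, since multiplication by a smooth positive function preserves viscosity inequalities. The cancellation of the $rX$ terms in $C$ is the step that must be checked carefully, because it is what removes all $X$-dependence from the reduced inequality.
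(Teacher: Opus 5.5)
Your proposal is correct and follows the paper's proof essentially step for step: you solve the linear cash dynamics to split off $X/\delta(t,T)$, use the exponential form of $u$ to factor out $e^{-\gamma X/\delta(t,T)}$ and turn the supremum into an infimum that defines $Q$, then substitute the factorised form into $A$, $B$, $C$ and use $-e^{-\gamma X/\delta}<0$ to turn $\max\{A,B,C\}=0$ into the reduced minimum. Your expression $U_t=-e^{-\gamma X/\delta}\bigl(Q_t+\tfrac{\gamma rX}{\delta}Q\bigr)$ is the correct one, whereas the paper's display has a minus sign there, which would not give the cancellation with $rXU_X$ that both arguments rely on.
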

\begin{proof}
See Appendix~\ref{app:cara_derivation}.
\end{proof}

The key consequence is that $Q$ does not depend on $X$. Once $Q$ is
computed by solving \eqref{eq:Q_vi} on the three-dimensional grid
$(t, y, S)$, the full value function $U$ is recovered from
\eqref{eq:factorisation} for any initial cash balance $X$ at negligible
additional cost.


\subsection{Indifference Pricing}
\label{subsec:indifference}

Utility indifference pricing asks: what cash amount $p$ makes the
investor exactly indifferent between selling the option and receiving $p$,
versus not selling? Formally, the writer's indifference price $p^w$
satisfies
\begin{equation}
  U^{\mathrm{with}}(t, S, y, X + p^w) = U^{\mathrm{without}}(t, S, y, X),
\end{equation}
where the superscripts refer to the value functions with and without
the option liability respectively. Using the factorisation
\eqref{eq:factorisation}, the cash dependence cancels and one obtains
the explicit formula
\begin{equation}
  p^w(t, y, S) = \frac{\delta(t,T)}{\gamma}
    \left[\log Q^{\mathrm{without}}(t, y, S)
    - \log Q^{\mathrm{with, writer}}(t, y, S)\right].
  \label{eq:writer_price}
\end{equation}
The buyer's indifference price $p^b$ is defined symmetrically by
\begin{equation}
  p^b(t, y, S) = \frac{\delta(t,T)}{\gamma}
    \left[\log Q^{\mathrm{with, buyer}}(t, y, S)
    - \log Q^{\mathrm{without}}(t, y, S)\right].
  \label{eq:buyer_price}
\end{equation}
In the presence of transaction costs, $p^w \geq p^b$ in general,
generating a bid-ask spread that widens with both $\lambda$ and $\gamma$.
This spread vanishes as $\lambda \to 0$, where both prices converge to
the Black-Scholes price.

\begin{remark}
The factorisation \eqref{eq:factorisation} and the resulting pricing
formulas \eqref{eq:writer_price}--\eqref{eq:buyer_price} hold because
CARA utility is translation-invariant: $u(x + c) = e^{-\gamma c} u(x)$
up to a constant. This is the key analytic advantage of exponential
utility in the presence of transaction costs.
\end{remark}


\subsection{The Whalley-Wilmott Asymptotic Band}
\label{subsec:ww}
 
The quasi-variational inequality \eqref{eq:Q_vi} must in general be solved
numerically. However, in the limit of small transaction costs
$\lambda \to 0$, \citet{WhalleyWilmott1997} show that the width of the
no-transaction band can be obtained analytically by a matched asymptotic
expansion. The key insight is that when $\lambda$ is small, the
no-transaction band is a narrow region of width $O(\lambda^{1/3})$
around the frictionless optimal holding $y^*(t,S)$. This fractional
power arises naturally from the structure of the problem: the
suboptimality cost of staying within the band is quadratic in the
deviation from $y^*$, while the transaction cost of trading is linear
in the trade size, and balancing these two contributions leads to a
cubic optimisation whose minimiser scales as $\lambda^{1/3}$. A
heuristic sketch of the derivation is given in
Appendix~\ref{app:ww_derivation}.
 
\paragraph{The asymptotic formula.}
To leading order in $\lambda$, the no-transaction band around the
frictionless optimal holding $y^*(t, S)$ has half-width
\begin{equation}
  h_{\mathrm{WW}}(t, S) =
  \left(\frac{3\lambda\delta(t,T) S\,\Gamma_{\mathrm{BS}}(t,S)^2}
             {2\gamma}\right)^{1/3},
  \label{eq:ww_band}
\end{equation}
where $\delta(t,T) = e^{-r(T-t)}$ is the discount factor and
$\Gamma_{\mathrm{BS}}(t, S) = \partial^2 f/\partial S^2$ is the
Black-Scholes gamma of the option in price space. The optimal policy is
therefore to do nothing when $y_t \in [y^* - h_{\mathrm{WW}},\,
y^* + h_{\mathrm{WW}}]$, and otherwise to trade to the nearest
boundary.
 
\paragraph{Interpretation.}
The formula \eqref{eq:ww_band} has clear intuitive content. The
half-width $h_{\mathrm{WW}}$ increases with $\lambda$ (wider band when
trading is more costly) and with $\Gamma_{\mathrm{BS}}$ (wider band
when the option delta changes rapidly with $S$, making rebalancing
errors more persistent and thus more costly to correct). It decreases
with $\gamma$ (tighter band for more risk-averse investors who tolerate
less deviation from the optimal position). The $\lambda^{1/3}$ scaling
is a well-known feature of singular control problems with proportional
costs \citep{WhalleyWilmott1997}.
 
\paragraph{Validity.}
The Whalley-Wilmott formula is an asymptotic result valid for small
$\lambda$. For large transaction costs it underestimates the true band
width, as confirmed by our numerical experiments in
Section~\ref{sec:results}. Nevertheless, it provides an accurate and
computationally free approximation in the small to moderate cost regime
($\lambda \lesssim 1\%$) that is most relevant in practice.
 

\subsection{Numerical Scheme}
\label{subsec:numerical_sc}

We solve the reduced quasi-variational inequality \eqref{eq:Q_vi} by backward
induction on a binomial tree for $S$ combined with a uniform grid for
$y$. The CARA dimension reduction of Proposition~\ref{prop:dim_reduction}
is essential here: because $Q$ does not depend on $X$, the scheme
operates on a three-dimensional grid $(t, y, S)$ rather than the full
four-dimensional state space, making the computation tractable.

\paragraph{Discretisation.}
We discretise time into $N$ steps of size $\Delta t = T/N$. The stock
price follows a binomial approximation with up and down factors
\begin{equation}
  u = e^{\sigma\sqrt{\Delta t}}, \qquad d = e^{-\sigma\sqrt{\Delta t}},
  \label{eq:binomial_ud}
\end{equation}
and real-world probability $p = (e^{\mu \Delta t} - d)/(u - d)$.
The holding $y$ is discretised on a uniform grid
$y_k = k \Delta y$ for $k \in \{-M, \ldots, M\}$, where $M=int(0.8 \lfloor N/2 \rfloor)$, and $\Delta y = \sigma \Delta t$.
We denote by $Q(t, S_j, y_k)$ the approximate value of $Q$ at node
$(t, S_j, y_k)$.

\paragraph{Backward recursion.}
Starting from the terminal condition, three actions are evaluated at
each node: do nothing, buy $\Delta y$ shares, or sell $\Delta y$ shares.
The exponential factors in the buy and sell updates arise from the CARA
factorisation \eqref{eq:factorisation}: a cash outflow of
$(1+\lambda)S\Delta y$ at time $t$ multiplies $Q$ by
$\exp(-\gamma(1+\lambda)S\Delta y / \delta(t,T))$, since $Q$ absorbs
the discounted trading costs. The optimal value is the minimum of the
three options, since $Q > 0$ and $U = 1 - e^{-\gamma X/\delta}Q$ is
increasing in $-Q$. The full procedure is given in
Algorithm~\ref{alg:dp}.

\begin{algorithm}[H]
{\small
\caption{Dynamic programming scheme for $Q(t, y, S)$}
\label{alg:dp}
\begin{algorithmic}[1]
\Require Grid parameters $N$, $M$, $\Delta y$,
         option payoff $\varphi$,
         model parameters $\mu$, $\sigma$, $r$, $\lambda$, $\gamma$
\State Compute binomial tree nodes $S_j$ and probability $p$
       via \eqref{eq:binomial_ud}
\State Initialise holding grid $y_k = k\Delta y$,
       $k = -M, \ldots, M$
\For{each terminal node $(S_j, y_k)$}
  \State $Q(T, S_j, y_k) \leftarrow
         \exp\!\bigl(-\gamma\bigl(y_k S_j
         - \lambda S_j |y_k| - \varphi(S_j)\bigr)\bigr)$
\EndFor
\For{$n = N-1, \ldots, 0$}
  \For{each node $(S_j, y_k)$ at time $t = n\Delta t$}
    \State $Q^{\mathrm{NT}} \leftarrow
           p\, Q(t{+}\Delta t,\, S_j u,\, y_k)
           + (1-p)\, Q(t{+}\Delta t,\, S_j d,\, y_k)$
    \State $Q^{\mathrm{Buy}} \leftarrow
           \exp\!\left(\dfrac{-\gamma(1+\lambda)S_j\,\Delta y}
           {\delta(t,T)}\right)
           Q(t,\, S_j,\, y_{k+1})$
    \State $Q^{\mathrm{Sell}} \leftarrow
           \exp\!\left(\dfrac{\gamma(1-\lambda)S_j\,\Delta y}
           {\delta(t,T)}\right)
           Q(t,\, S_j,\, y_{k-1})$
    \State $Q(t, S_j, y_k) \leftarrow
           \min\bigl\{Q^{\mathrm{NT}},\,
           Q^{\mathrm{Buy}},\,
           Q^{\mathrm{Sell}}\bigr\}$
    \State Record optimal action in $\mathcal{A}(t, S_j, y_k)
           \in \{\mathrm{NT}, \mathrm{Buy}, \mathrm{Sell}\}$
  \EndFor
\EndFor
\State Run scheme twice to obtain $Q^{\mathrm{with}}$ and
       $Q^{\mathrm{without}}$ (with and without $\varphi$)
\State $\hat{p}^w \leftarrow \dfrac{\delta(0,T)}{\gamma}
       \Bigl[\log Q^{\mathrm{without}}(0, S_0, y_0)
       - \log Q^{\mathrm{with}}(0, S_0, y_0)\Bigr]$
\State \Return writer price $\hat{p}^w$, action map $\mathcal{A}$
\end{algorithmic}
}
\end{algorithm}

\paragraph{No-transaction band extraction.}
The action map $\mathcal{A}$ recorded during the backward pass gives
the full partition of the $(S, y)$ space into the three regions of
Definition~\ref{def:regions}. For each stock price $S_j$ and time $t$,
the no-transaction band boundaries are extracted as
\begin{equation}
  \ell^{\mathrm{SC}}(t, S_j)
  = \min\{y_k : \mathcal{A}(t, S_j, y_k) = \mathrm{NT}\},
  \qquad
  u^{\mathrm{SC}}(t, S_j)
  = \max\{y_k : \mathcal{A}(t, S_j, y_k) = \mathrm{NT}\}.
  \label{eq:sc_band_extract}
\end{equation}
These boundaries are used in Section~\ref{sec:results} to compare the
stochastic control solution directly with the no-transaction bands
learned by NTBN-$\Delta$ and WW-NTBN.

\paragraph{Buyer's price.}
The buyer's indifference price is computed by modifying the terminal
condition to reflect the buyer's position: the buyer receives the
payoff $\varphi(S_T)$ at maturity, so the terminal condition becomes
$Q^{\mathrm{buyer}}(T, S_j, y_k) = \exp(-\gamma(y_k S_j - \lambda
S_j|y_k| + \varphi(S_j)))$. The buyer's price then follows from
\eqref{eq:buyer_price} analogously to the writer's price.

%% file: callspread.tex

\section{The Bull Call Spread under Transaction Costs}
\label{sec:callspread}

\subsection{Definition and the Frictionless Case}

A bull call spread consists of a long position in a European call with
strike $K_1$ and a short position in a European call with the same
maturity $T$ and a higher strike $K_2 > K_1$, written on the same
underlying. The terminal payoff is
\begin{equation}
  \varphi_{\mathrm{CS}}(S_T)
  = (S_T - K_1)^+ - (S_T - K_2)^+.
  \label{eq:cs_payoff}
\end{equation}
This strategy allows the investor to benefit from a moderate rise in the
underlying while paying a lower premium than for a single call, since
the short leg partially offsets the cost of the long leg.

In the frictionless case, the pricing problem is linear. The
Black-Scholes PDE is linear in the option value, so the price of any
combination of options is simply the corresponding combination of
individual prices. For the bull call spread this gives
\begin{equation}
  p_{\mathrm{CS}} = f(t, S; K_1) - f(t, S; K_2),
  \label{eq:cs_frictionless}
\end{equation}
where $f(t, S; K)$ denotes the Black-Scholes price of a call with
strike $K$. Equivalently, using the duality relation
\eqref{eq:bs_duality}, the value function of an investor short the
spread satisfies
\begin{equation}
  U^{\mathrm{CS}}(t, S, W)
  = U\bigl(t, W - f(t,S;K_1) + f(t,S;K_2)\bigr),
\end{equation}
and the optimal hedge is $\Delta_{\mathrm{BS}}(t,S;K_1) -
\Delta_{\mathrm{BS}}(t,S;K_2)$, the difference of the two individual
deltas. Linearity of both pricing and hedging is a direct consequence
of the linearity of the frictionless HJB equation.

\subsection{Breakdown of Linearity under Transaction Costs}

Once proportional transaction costs are introduced, the linearity
argument breaks down at a precise point. Under transaction costs,
pricing and hedging are governed by the quasi-variational inequality
\eqref{eq:vi}, which is nonlinear. The key object that determines the
structure of the optimal strategy is the no-transaction band, whose
half-width is given to leading order by the Whalley-Wilmott formula
\eqref{eq:ww_band}:
\begin{equation}
  h_{\mathrm{WW}}(t, S)
  = \left(\frac{3\lambda \delta(t,T)S\,\Gamma^2(t,S)}{2\gamma}\right)^{1/3},
\end{equation}
where $\Gamma(t,S)$ is the gamma of the option being hedged. The width
of the no-transaction band is therefore a nonlinear function of the
gamma, scaling as $\Gamma^{2/3}$.

Now consider hedging the two legs of the call spread separately. Each
leg has its own no-transaction band, with half-widths
$h_{\mathrm{WW}}(t,S;K_1)$ and $h_{\mathrm{WW}}(t,S;K_2)$
respectively. When the two strategies are combined, a rebalancing trade
is triggered whenever the joint holding exits either individual band.
The effective no-transaction band for the combined position is therefore
the intersection of the two individual bands, which is generically much
narrower than either band alone. Near the money, where the gammas of
the two calls are largest and most similar, this intersection can
nearly vanish, forcing almost continuous rebalancing and generating
transaction costs that approach those of delta-hedging without any
band structure at all.

The correct approach is to treat the call spread as a single instrument
and apply the QVI framework directly to the combined payoff
\eqref{eq:cs_payoff}. The gamma of the spread is
\begin{equation}
  \Gamma_{\mathrm{CS}}(t, S)
  = \Gamma_{\mathrm{BS}}(t, S; K_1) - \Gamma_{\mathrm{BS}}(t, S; K_2),
  \label{eq:cs_gamma}
\end{equation}
which is strictly smaller in magnitude than either individual gamma
when the two calls are close to the money, since the two gammas
partially cancel. By the Whalley-Wilmott formula, this smaller gamma
yields a wider no-transaction band for the jointly hedged spread than
the intersection of the two individual bands. The joint hedger therefore
trades less frequently, incurs lower transaction costs, and consequently
prices the spread at a lower indifference price than the naive approach.

This non-linearity has a direct consequence for pricing. Define the
naive spread price as the difference of the two individual indifference
prices,
\begin{equation}
  p_{\mathrm{naive}} = p^w(K_1) - p^b(K_2),
\end{equation}
where $p^w(K_1)$ is the writer's indifference price for the $K_1$ call
and $p^b(K_2)$ is the buyer's indifference price for the $K_2$ call.
The joint indifference price $p_{\mathrm{CS}}$ satisfies
\begin{equation}
  p_{\mathrm{CS}} \leq p_{\mathrm{naive}},
\end{equation}
with equality only when $\lambda = 0$. The gap $p_{\mathrm{naive}} -
p_{\mathrm{CS}}$ grows with the transaction cost level $\lambda$ and
reflects the hedging inefficiency of treating the two legs
independently. This is confirmed numerically in Section~\ref{sec:results}.

\begin{remark}
The argument above illustrates a general principle: under transaction
costs, the price of a portfolio of derivatives is not the sum of the
individual prices. The indifference pricing functional is subadditive
for a writer, meaning that hedging multiple liabilities jointly is
always at least as efficient as hedging them separately. This is
precisely because the no-transaction bands of the joint problem are
wider than the intersection of the individual bands, reducing the total
cost of hedging.
\end{remark}

\subsection{Numerical Treatment}

The joint hedging problem for the call spread fits directly into the
framework of Section~\ref{sec:stochastic_control}. It suffices to
replace the terminal condition \eqref{eq:terminal} with
\begin{equation}
  U(T, S, y, X)
  = u\bigl(X + yS - \lambda S|y| - \varphi_{\mathrm{CS}}(S)\bigr),
\end{equation}
where $\varphi_{\mathrm{CS}}$ is given by \eqref{eq:cs_payoff}, and
run the dynamic programming scheme of
Section~\ref{subsec:numerical_sc} unchanged. The indifference price
is then computed via \eqref{eq:writer_price} by comparing the value
function with and without the spread liability.

For the naive approach, the two legs are priced and hedged
independently using the same scheme, and the resulting strategies are
superimposed. The comparison of the two approaches in terms of
no-transaction band geometry, terminal P\&L distributions, trade
metrics, and prices is presented in Section~\ref{sec:results}.

%% file: deep_hedging.tex

\section{Deep Hedging}
\label{sec:deep_hedging}

\subsection{From Stochastic Control to Neural Networks}

The stochastic control framework developed in
Section~\ref{sec:stochastic_control} provides a rigorous and
interpretable solution to the hedging problem under transaction costs.
However, it faces fundamental limitations as a practical tool. The
dynamic programming scheme requires discretising a state space of
dimension four $(t, S, y, X)$, which becomes computationally prohibitive
as the grid is refined. The approach also requires strong modelling
assumptions: the derivation of the HJB equation, the dimension reduction
of Proposition~\ref{prop:dim_reduction}, and the Whalley-Wilmott
approximation all rely on the specific structure of the GBM dynamics,
and extending the framework to richer price processes typically destroys
the tractability that makes it appealing. Finally, the scheme scales
poorly to portfolios of multiple derivatives, where the state space
dimension grows with the number of instruments.

Deep hedging \citep{BuehlerGononTeichmannWood2019} addresses these
limitations by replacing the explicit solution of the dynamic
programming equations with a direct optimisation over the parameters
of a neural network. The network learns a hedging policy from simulated
market trajectories, and the computational cost scales with the number
of paths rather than the dimension of the state space, making the
approach naturally suited to complex payoffs and multi-instrument
portfolios. The MLP and NTBN-$\Delta$ architectures we consider are
fully model-agnostic: they require only the ability to simulate paths
of the underlying, with no assumption on the price dynamics. The
WW-NTBN uses the Black-Scholes gamma as a structural prior, which
introduces a dependence on GBM dynamics, but this prior could in
principle be replaced by a model-specific sensitivity under richer
dynamics, preserving the same architectural advantages. As we will
show, the deep hedging approach produces hedging strategies and
indifference prices that are quantitatively consistent with the
stochastic control solution in the GBM setting, providing a valuable
cross-validation of both approaches.

\subsection{Problem Formulation and Loss Function}

We retain the same economic objective as in
Section~\ref{sec:stochastic_control}: the investor seeks to minimise
risk when holding a short position in a European option with payoff
$\varphi(S_T)$, subject to proportional transaction costs at rate
$\lambda$. For simplicity we set $r = 0$ throughout this section, so
discounting plays no role.

Rather than maximising expected CARA utility directly, we minimise the
\emph{entropic risk measure}
\begin{equation}
  \rho_\gamma(X) = \frac{1}{\gamma} \log \mathbb{E}\bigl[e^{-\gamma X}\bigr],
  \label{eq:erm}
\end{equation}
where $X$ is the terminal P\&L of the hedging strategy and $\gamma > 0$
is the risk-aversion parameter. Minimising $\rho_\gamma(X)$ is
equivalent to maximising $1-\mathbb{E}[-e^{-\gamma X}]$, which is the
objective associated with the CARA utility $u(x) = 1 -e^{-\gamma x}$. The
entropic risk measure is preferred numerically because it is a convex risk measure \citep{FoellmerSchied2002}, yielding a
well-behaved optimisation landscape with stable gradients.

The terminal P\&L for a hedging strategy $\{y_t\}_{t \leq T}$ is
\begin{equation}
  X = -\varphi(S_T)
    + \sum_{t=0}^{T-\Delta t} y_t (S_{t+\Delta t} - S_t)
    - \sum_{t=0}^{T-\Delta t} \lambda S_t |y_t - y_{t-1}|,
  \label{eq:pnl}
\end{equation}
where the first term is the option liability, the second is the
cumulative gain from the delta position, and the third is the total
transaction cost incurred. The hedge is initialised at $y_{-1} = 0$.

\paragraph{From optimal control to neural network parametrisation.}
In the stochastic control formulation of
Section~\ref{sec:stochastic_control}, the investor directly controls
the trading rate $(l_t, m_t)$ at each time $t$, and the optimal policy
is a function of the full state $(t, S_t, y_t, X_t)$.. In the deep
hedging framework, we instead seek the optimal policy within the class
of functions representable by a neural network. Formally, let
$I_t \in \mathbb{R}^d$ denote the feature vector available to the
investor at time $t$ (for instance, log-moneyness, time to maturity,
and volatility). A neural network with parameters $\theta$ defines a
deterministic policy
\begin{equation}
  y_t = \pi_\theta(I_t, y_{t-1}),
  \label{eq:nn_policy}
\end{equation}
mapping the current market information and previous position to the new
hedge. The original optimisation over adapted processes $\{y_t\}$ is
thereby replaced by a finite-dimensional optimisation over $\theta$:
\begin{equation}
  \min_\theta \; \rho_\gamma\bigl(X^\theta\bigr)
  = \min_\theta \; \frac{1}{\gamma}
    \log \mathbb{E}\bigl[e^{-\gamma X^\theta}\bigr],
  \label{eq:dh_objective}
\end{equation}
where $X^\theta$ denotes the terminal P\&L \eqref{eq:pnl} under the
policy $\pi_\theta$. This is a standard stochastic optimisation problem
that can be solved by gradient-based methods: given a batch of simulated
paths, the loss $\rho_\gamma(X^\theta)$ is evaluated, its gradient with
respect to $\theta$ is computed by backpropagation through the sequence
of hedge decisions \eqref{eq:nn_policy}, and $\theta$ is updated using
the Adam optimiser \citep{KingmaBa2015}.

The expressiveness of the policy class depends on the network
architecture. A sufficiently deep and wide network can approximate any
measurable function of $I_t$ \citep{Hornik1991}, so in principle the
neural network policy can converge to the true optimal policy as the
network size grows. In practice, the choice of architecture matters
greatly: a poorly designed architecture may require many more parameters
and training samples to learn the same policy. The three architectures
we consider in Section~\ref{subsec:architectures} differ precisely in
how much structural knowledge of the optimal policy is embedded in the
network design.

\paragraph{Neural network structure.}
A feedforward neural network with $L$ hidden layers computes
\begin{equation}
  x^{(0)} = I_t, \qquad
  x^{(\ell+1)} = \sigma\!\left(W^{(\ell)} x^{(\ell)} + b^{(\ell)}\right),
  \quad \ell = 0, \ldots, L-1,
  \label{eq:nn_layers}
\end{equation}
where $W^{(\ell)}$ and $b^{(\ell)}$ are the weight matrix and bias
vector of layer $\ell$, and $\sigma$ is a nonlinear activation function
applied elementwise. The output layer is linear and produces a
scalar or vector depending on the architecture. The activation function
introduces the nonlinearity necessary to approximate complex
input-output mappings; without it, the entire network would collapse to
a single affine map regardless of depth. We use the ReLU activation
$\sigma(x) = \max(0, x)$ throughout, which preserves gradient flow
during training while being computationally efficient. The full
parameter set is $\theta = \{W^{(\ell)}, b^{(\ell)}\}_{\ell=0}^{L}$,
optimised end-to-end by minimising \eqref{eq:dh_objective}.

\subsection{Network Architectures}
\label{subsec:architectures}

At each rebalancing date $t$, the network receives a feature vector
$I_t \in \mathbb{R}^d$ and outputs a hedging decision. The feature
vector consists of the log-moneyness $\log(S_t / K)$, the time to
maturity $T - t$, and the volatility $\sigma$. All three architectures
share this input structure; they differ in how they map $I_t$ to the
hedge $y_t$.

\subsubsection{MLP Baseline}

The first architecture is a standard feedforward Multi-Layer Perceptron
(MLP). At each time step, the network takes as input the feature vector
$I_t$ augmented with the previous hedge $y_{t-1}$, and outputs the new
hedge directly:
\begin{equation}
  y_t = \mathrm{MLP}_\theta(I_t, y_{t-1}).
\end{equation}
The network consists of five hidden layers with 32 neurons each and ReLU
activations, followed by a linear output layer producing a scalar. The
inclusion of $y_{t-1}$ as an input allows the network to account for
the transaction cost of moving from the current position, but
introduces path-dependence into the optimisation landscape: the network
must simultaneously learn the target hedge level and the cost of
reaching it from the current position. This action-dependence can slow
convergence and introduce instability, particularly at high transaction
cost levels. Despite this limitation, the MLP serves as a useful
baseline and provides insight into the behaviour of unconstrained
learned strategies.

\subsubsection{NTBN-$\Delta$}

The No-Transaction Band Network of
\citet{ImakiImajoItoMinamiNakagawa2021} addresses the action-dependence
of the MLP by explicitly modelling the no-transaction band structure
known from stochastic control theory. Rather than outputting a hedge
directly, the network outputs the lower and upper boundaries $(\ell_t,
u_t)$ of a no-transaction band, and the hedge is updated via a clamping
operation:
\begin{equation}
  y_t = \mathrm{clamp}(y_{t-1}, \ell_t, u_t)
      = \begin{cases}
          \ell_t & \text{if } y_{t-1} < \ell_t, \\
          y_{t-1} & \text{if } y_{t-1} \in [\ell_t, u_t], \\
          u_t & \text{if } y_{t-1} > u_t.
        \end{cases}
  \label{eq:clamp}
\end{equation}
This design removes $y_{t-1}$ from the network input: the network
receives only the market features $I_t$ and learns when to rebalance
rather than how much to rebalance, substantially simplifying the
optimisation problem.

In their experimental implementation,
\citet{ImakiImajoItoMinamiNakagawa2021} already centre the band around
the Black-Scholes delta: the two network outputs are added and
subtracted from $\Delta_{\mathrm{BS}}$ to obtain $u_t$ and $\ell_t$
respectively. However, this choice is not presented as an explicit
architectural decision and no theoretical motivation is given for it.
We formalise it here as a principled design choice, justified directly
by the stochastic control analysis: the frictionless optimum
\eqref{eq:opt_hold_option} shows that the natural centre of the
no-transaction band is the Black-Scholes delta $\Delta_{\mathrm{BS}}(t,
S)$, and centring the band there from the outset removes the need for
the network to discover this structure from data.

The resulting architecture, which we call \emph{NTBN-$\Delta$} to make
the delta-centring explicit, outputs two half-width corrections
$(\delta_\ell, \delta_u) \in \mathbb{R}^2$, and the band boundaries are
\begin{equation}
  \ell_t = \Delta_{\mathrm{BS}}(t, S_t) - \mathrm{LeakyReLU}(\delta_\ell),
  \qquad
  u_t = \Delta_{\mathrm{BS}}(t, S_t) + \mathrm{LeakyReLU}(\delta_u),
  \label{eq:ntbn_delta_band}
\end{equation}
where LeakyReLU ensures the half-widths are non-negative while
maintaining gradient flow. The hedge is then computed via \eqref{eq:clamp}
without introducing any additional model assumptions beyond those
already implicit in the Black-Scholes delta, it serves as the baseline on which WW-NTBN is built.

\begin{figure}[H]
\centering

\begin{subfigure}[b]{0.42\textwidth}
\centering
\begin{tikzpicture}[
  every node/.style = {font=\small},
  circ/.style = {circle, draw, minimum size=0.85cm, inner sep=1pt, align=center},
  box/.style  = {rectangle, draw, minimum width=1.8cm, minimum height=2.6cm, align=center},
  ln/.style   = {thin},
  labl/.style = {font=\footnotesize, align=center}
]

\node[circ] (It)   at (0,  0.65) {$I_{t}$};
\node[circ] (yt1)  at (0, -0.65) {$y_{t\text{-}1}$};
\node[box]  (NN)   at (2.0, 0)   {NN};
\node[circ] (yt)   at (3.9, 0)   {$y_{t}$};

\draw[ln] (It.east)  -- ($(NN.west)+(0, 0.4)$);
\draw[ln] (yt1.east) -- ($(NN.west)+(0,-0.4)$);
\draw[ln] (NN.east)  -- (yt.west);

\end{tikzpicture}
\caption{Feed-forward network (MLP).}
\label{fig:mlp_diagram}
\end{subfigure}
\hfill
\begin{subfigure}[b]{0.55\textwidth}
\centering
\begin{tikzpicture}[
  every node/.style = {font=\small},
  circ/.style = {circle, draw, minimum size=0.85cm, inner sep=1pt, align=center},
  box/.style  = {rectangle, draw, minimum width=1.8cm, minimum height=2.6cm, align=center},
  sbox/.style = {rectangle, draw, minimum width=1.3cm, minimum height=1.0cm, align=center},
  ln/.style   = {thin},
  labl/.style = {font=\footnotesize, align=center}
]

\node[circ] (It)   at (0,  0.65) {$I_{t}$};
\node[circ] (dbs)  at (0, -0.65) {$\Delta_{\mathrm{BS}}$};
\node[box]  (NN)   at (2.0, 0)   {NN};
\node[circ] (dl)   at (3.8,  0.65) {$\delta_{\ell}$};
\node[circ] (du)   at (3.8, -0.65) {$\delta_{u}$};
\node[circ] (lt)   at (5.5,  0.75) {$\ell_{t}$};
\node[circ] (yt1)  at (5.5,  0)    {$y_{t\text{-}1}$};
\node[circ] (ut)   at (5.5, -0.75) {$u_{t}$};
\node[sbox] (CL)   at (7.1,  0)    {clamp};
\node[circ] (yt)   at (8.7,  0)    {$y_{t}$};

\draw[ln] (It.east)  -- ($(NN.west)+(0, 0.4)$);
\draw[ln] (dbs.east) -- ($(NN.west)+(0,-0.4)$);

\draw[ln] ($(NN.east)+(0, 0.4)$) -- (dl.west);
\draw[ln] ($(NN.east)+(0,-0.4)$) -- (du.west);

\draw[ln] (dl.east) -- (lt.west);
\draw[ln] (du.east) -- (ut.west);

\draw[ln] (lt.east)  -- ($(CL.west)+(0, 0.35)$);
\draw[ln] (yt1.east) -- (CL.west);
\draw[ln] (ut.east)  -- ($(CL.west)+(0,-0.35)$);

\draw[ln] (CL.east) -- (yt.west);

\end{tikzpicture}
\caption{NTBN-$\Delta$.}
\label{fig:ntbn_delta_diagram}
\end{subfigure}

\caption{Comparison of the MLP and NTBN-$\Delta$ architectures.
$\ell_t = \Delta_{\mathrm{BS}} - \mathrm{LeakyReLU}(\delta_\ell)$ and
$u_t = \Delta_{\mathrm{BS}} + \mathrm{LeakyReLU}(\delta_u)$.
}
\label{fig:architecture_comparison}
\end{figure}
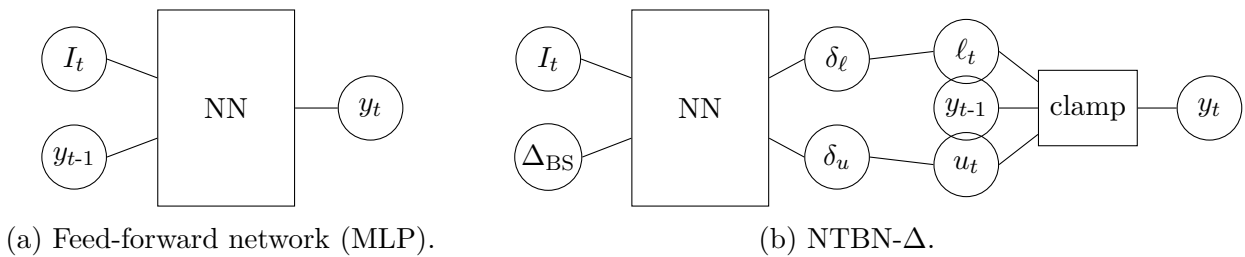

\subsubsection{WW-NTBN}

While NTBN-$\Delta$ correctly centres the band, it still leaves the
bandwidth to be learned entirely from data. The Whalley--Wilmott formula
\eqref{eq:ww_band} provides an analytical expression for the
half-bandwidth that is accurate for small to moderate transaction costs
and encodes the correct scaling with $\lambda$ and $\gamma$.
We propose to use this as a structural prior, allowing the network to
learn only a residual correction on top of the analytical baseline.

The \emph{Whalley--Wilmott No-Transaction Band Network} (WW-NTBN)
incorporates two improvements over NTBN-$\Delta$. The first is the
WW-guided bandwidth. Given the feature vector $I_t = (\log(S_t/K),
T-t, \sigma)$, we compute the Black-Scholes gamma
\begin{equation}
  \Gamma_{\mathrm{BS}}(t, S_t)
  = \frac{\phi(d_1)}{\sigma S_t \sqrt{T-t}},
  \qquad
  d_1 = \frac{\log(S_t/K) + \frac{1}{2}\sigma^2(T-t)}{\sigma\sqrt{T-t}},
  \label{eq:bs_gamma}
\end{equation}
and the Whalley--Wilmott half-width
\begin{equation}
  h_{\mathrm{WW}}(t, S_t)
  = \left(\frac{3\lambda S_t\,
    \Gamma^2_{\mathrm{BS}}(t, S_t)}{2\gamma}\right)^{1/3}.
  \label{eq:ww_prior}
\end{equation}
The MLP then outputs two residual corrections
$(\epsilon_\ell, \epsilon_u) \in \mathbb{R}^2$, and the band boundaries
are
\begin{equation}
  \ell_t = \Delta_{\mathrm{BS}}(t, S_t)
           - \mathrm{LeakyReLU}\bigl(h_{\mathrm{WW}}(t, S_t)
             + \epsilon_\ell\bigr),
  \qquad
  u_t = \Delta_{\mathrm{BS}}(t, S_t)
        + \mathrm{LeakyReLU}\bigl(h_{\mathrm{WW}}(t, S_t)
          + \epsilon_u\bigr).
  \label{eq:ww_ntbn_band}
\end{equation}
At initialisation, the residual corrections are zero and the band
reduces exactly to the Whalley--Wilmott approximation. The network
therefore begins training with the correct asymptotic structure and
learns corrections to it, rather than learning the band from scratch.

The second improvement is the replacement of the hard clamp
\eqref{eq:clamp} with a differentiable \emph{soft clamp}.
The hard
clamp has zero gradient with respect to $y_{t-1}$ whenever the previous
hedge lies outside the band, which removes the learning signal for
trajectories where the position is far from the band. We replace it
with
\begin{equation}
  \mathrm{softclamp}(x, \ell, u)
  = m + h \cdot \tanh\!\left(\beta \cdot \frac{x - m}{h}\right),
  \label{eq:softclamp}
\end{equation}
where $m = (\ell + u)/2$ is the band midpoint, $h = (u - \ell)/2$ is
the half-width, and $\beta > 0$ controls the sharpness of the
transition. For large $\beta$, \eqref{eq:softclamp} approximates the
hard clamp; for moderate $\beta$ (we use $\beta = 10$ in our
experiments), it provides non-zero gradients everywhere, giving the
network a learning signal even when the previous hedge is well outside
the band. The full forward pass of the WW-NTBN at time $t$ is
summarised in Algorithm~\ref{alg:ww_ntbn_forward}.

\begin{algorithm}[H]
\caption{WW-NTBN forward pass at time $t$}
\label{alg:ww_ntbn_forward}
\begin{algorithmic}[1]
\Require Features $I_t = (\log(S_t/K),\, T-t,\, \sigma)$,
         previous hedge $y_{t-1}$,
         parameters $\lambda$, $\gamma$, $\beta$
\State Compute $\Delta_{\mathrm{BS}}(t, S_t)$ and
       $\Gamma_{\mathrm{BS}}(t, S_t)$ from $I_t$
       \hfill (closed-form BS formulas)
\State $h_{\mathrm{WW}} \leftarrow
       \bigl(3\lambda S_t \Gamma^2_{\mathrm{BS}} /
       (2\gamma)\bigr)^{1/3}$
       \hfill (WW half-width prior)
\State $(\epsilon_\ell, \epsilon_u) \leftarrow \mathrm{MLP}_\theta(I_t)$
       \hfill (residual corrections)
\State $\ell_t \leftarrow \Delta_{\mathrm{BS}}
       - \mathrm{LeakyReLU}(h_{\mathrm{WW}} + \epsilon_\ell)$
\State $u_t \leftarrow \Delta_{\mathrm{BS}}
       + \mathrm{LeakyReLU}(h_{\mathrm{WW}} + \epsilon_u)$
\State $y_t \leftarrow \mathrm{softclamp}(y_{t-1}, \ell_t, u_t)$
       \hfill (differentiable band enforcement)
\State \Return $y_t$
\end{algorithmic}
\end{algorithm}

We note that WW-NTBN partly sacrifices the model-agnostic character of
deep hedging. By incorporating $\Delta_{\mathrm{BS}}$ and
$\Gamma_{\mathrm{BS}}$ as structural inputs, the architecture
implicitly assumes log-normal dynamics for the underlying asset. This
is a deliberate trade-off: the WW prior encodes structure that would
otherwise require many training samples to discover, yielding faster
convergence and better generalisation across transaction cost regimes.
In settings where the log-normal assumption is inappropriate,
NTBN-$\Delta$ remains the better choice.

\begin{figure}[H]
\centering
\begin{tikzpicture}[
  every node/.style = {font=\small},
  circ/.style = {
    circle, draw, minimum size=0.85cm,
    inner sep=1pt, align=center
  },
  box/.style = {
    rectangle, draw, rounded corners=3pt,
    minimum width=1.3cm, minimum height=1.6cm,
    align=center
  },
  sbox/.style = {
    rectangle, draw, rounded corners=3pt,
    minimum width=1.5cm, minimum height=1.6cm,
    align=center
  },
  arr/.style  = {->, >=stealth, thin},
  labl/.style = {font=\footnotesize, align=center}
]


\node[circ] (It)   at (0,    0)     {$I_{t}$};
\node[box]  (NN)   at (1.9,  0)     {NN};
\node[circ] (epsl) at (3.6,  0.65)  {$\varepsilon_{\ell}$};
\node[circ] (epsu) at (3.6, -0.65)  {$\varepsilon_{u}$};
\node[box]  (WW)   at (5.4,  0)     {WW\\prior};
\node[circ] (lt)   at (7.2,  0.75)  {$\ell_{t}$};
\node[circ] (yt1)  at (7.2,  0)     {$y_{t\text{-}1}$};
\node[circ] (ut)   at (7.2, -0.75)  {$u_{t}$};
\node[sbox] (SC)   at (9.1,  0)     {soft\\clamp};
\node[circ] (yt)   at (10.9, 0)     {$y_{t}$};


\draw[->, >=stealth, thin] (It) -- (NN);

\draw[->, >=stealth, thin] (NN.east) -- ++(0.2,0) |- (epsl.west);
\draw[->, >=stealth, thin] (NN.east) -- ++(0.2,0) |- (epsu.west);

\draw[->, >=stealth, thin]
  (It.north) -- ++(0, 0.65)
              -- ++(5.4, 0)
              -- (WW.north);

\draw[->, >=stealth, thin]
  (epsl.east) -- ($(WW.west)+(0, 0.4)$);

\draw[->, >=stealth, thin]
  (epsu.east) -- ($(WW.west)+(0,-0.4)$);

\draw[->, >=stealth, thin]
  ($(WW.east)+(0, 0.4)$) -- (lt.west);

\draw[->, >=stealth, thin]
  ($(WW.east)+(0,-0.4)$) -- (ut.west);

\draw[->, >=stealth, thin]
  (lt.east) -- ($(SC.west)+(0, 0.4)$);

\draw[->, >=stealth, thin]
  (yt1.east) -- (SC.west);

\draw[->, >=stealth, thin]
  (ut.east) -- ($(SC.west)+(0,-0.4)$);

\draw[->, >=stealth, thin] (SC) -- (yt);


\node[labl] at (1.9,  -1.5) {residual MLP};
\node[labl] at (5.4,  -1.5) {WW bandwidth};
\node[labl] at (9.1,  -1.5) {differentiable\\clamp};

\end{tikzpicture}
\caption{Forward pass of the WW-NTBN.}
\label{fig:ww_ntbn_diagram}
\end{figure}
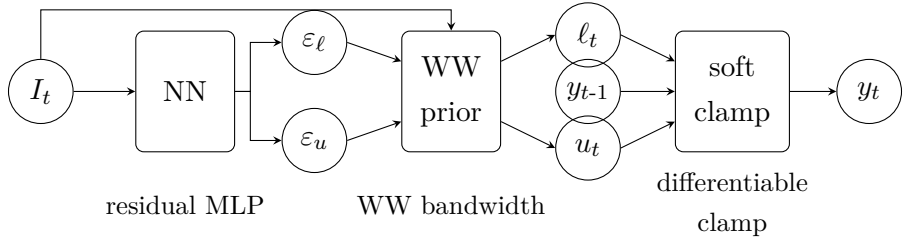

\subsection{Training Algorithm}

All three networks are trained using the same procedure, summarised in
Algorithm~\ref{alg:training}. The key steps are the simulation of GBM
paths, the application of the network policy to compute terminal P\&L,
and the minimisation of the entropic risk measure via stochastic
gradient descent.

\begin{algorithm}[H]
\caption{Deep hedging training}
\label{alg:training}
\begin{algorithmic}[1]
\Require Network parameters $\theta$, risk aversion $\gamma$,
         transaction cost $\lambda$, number of epochs $E$,
         batch size $N$, learning rate $\eta$
\For{$e = 1, \ldots, E$}
  \State Sample $N$ paths $\{S_t^{(i)}\}_{t=0}^T$ of GBM with
         parameters $(\mu, \sigma)$
  \For{each path $i = 1, \ldots, N$}
    \State Initialise $y_{-1}^{(i)} = 0$
    \For{$t = 0, \Delta t, \ldots, T - \Delta t$}
      \State Compute $I_t^{(i)}$ from $S_t^{(i)}$
      \State $y_t^{(i)} \leftarrow \pi_\theta(I_t^{(i)}, y_{t-1}^{(i)})$
             \hfill (network forward pass)
    \EndFor
    \State Compute terminal P\&L $X^{(i)}$ via \eqref{eq:pnl}
  \EndFor
  \State $\mathcal{L}(\theta) \leftarrow
         \frac{1}{\gamma}\log\!\left(\frac{1}{N}
         \sum_{i=1}^N e^{-\gamma X^{(i)}}\right)$
  \State $\theta \leftarrow \theta - \eta \nabla_\theta \mathcal{L}(\theta)$
         \hfill (Adam optimiser)
\EndFor
\State \Return $\theta$
\end{algorithmic}
\end{algorithm}

\subsection{Indifference Pricing}

Once the network is trained, the writer's indifference price is
estimated on an independent out-of-sample set of $N_{\mathrm{eval}}$
paths. For each path, two terminal P\&L values are computed: $X^{\mathrm{no}}$
using the trained strategy without the option liability, and
$X^{\mathrm{with}}$ with the liability included. The writer's price is
then
\begin{equation}
  \hat{p}^w = \frac{1}{\gamma}\log\!\left(
    \frac{\frac{1}{N}\sum_i e^{-\gamma X_i^{\mathrm{no}}}}
         {\frac{1}{N}\sum_i e^{-\gamma X_i^{\mathrm{with}}}}
  \right),
  \label{eq:dh_price}
\end{equation}
which is the empirical analogue of the theoretical formula
\eqref{eq:writer_price}. The buyer's price is computed symmetrically.
This direct correspondence between the deep hedging price
\eqref{eq:dh_price} and the stochastic control price
\eqref{eq:writer_price} is what makes the comparison in
Section~\ref{sec:results} meaningful: both approaches are computing
the same economic quantity, the utility indifference price, by
different methods.

%% file: results.tex

\section{Numerical Results}
\label{sec:results}

\subsection{Experimental Setup}
\label{subsec:setup}

All experiments use the following shared parameters: $S_0 = K = 1.0$,
$\sigma = 0.2$, $r = \mu = 0$, $\gamma = 1.0$, $T = 1.0$. Transaction
costs are tested at five levels: $\lambda \in \{0\%, 0.1\%, 0.5\%,
1\%, 5\%\}$. For the call spread, $K_1 = 0.9$ and $K_2 = 1.1$.

The stochastic control scheme (Algorithm~\ref{alg:dp}) uses $N = 400$
time steps and a holding grid of $2M + 1$ points. We note that the numerical scheme is restricted to trades of
exactly $\pm\Delta y$ shares at each step, which is a simplification
of the continuous-control problem and introduces discretisation error
in the optimal strategy. The deep hedging models are trained using the
Adam optimiser \citep{KingmaBa2015} with a learning rate of $10^{-2}$.
Following \citet{ImakiImajoItoMinamiNakagawa2021}, who showed that a
relatively high learning rate leads to faster and stable convergence
for band-based architectures, we use this value uniformly across all
three architectures to ensure a fair comparison. All networks consist
of five hidden layers with 32 units each and ReLU activations. Prices
are evaluated on $10{,}000$ out-of-sample paths via \eqref{eq:dh_price}.
The five models compared throughout are: the Black-Scholes closed-form
price (BS), the stochastic control dynamic programming solution (SC),
the MLP baseline, NTBN-$\Delta$, and WW-NTBN.

\FloatBarrier
\subsection{Frictionless Benchmark}
\label{subsec:tc0}

Before examining the effect of transaction costs, we validate all
models against the Black-Scholes closed-form price at $\lambda = 0$.
In the frictionless case with $\mu = r$, the indifference price under
CARA utility coincides with the Black-Scholes price regardless of
$\gamma$ (see Section~\ref{subsec:frictionless}), so all models should
recover $f_{\mathrm{BS}} = 0.07966$.

\begin{table}[H]
\centering
\caption{Option prices at $\lambda = 0$ compared to the
         Black-Scholes reference.}
\label{tab:tc0_prices}
\begin{tabular}{lccccc}
\toprule
Model  & BS      & SC      & MLP     & NTBN-$\Delta$ & WW-NTBN \\
\midrule
Price  & 0.07966 & 0.07995 & 0.07983 & 0.08060       & 0.07965 \\
Error  & --      & $+0.00029$ & $+0.00017$ & $+0.00094$ & $-0.00001$ \\
\bottomrule
\end{tabular}
\end{table}

Table~\ref{tab:tc0_prices} shows that all models are in close agreement
with the BS reference. WW-NTBN achieves the closest match with an error
of $-0.00001$, followed by MLP ($0.00017$) and SC ($0.00029$).
NTBN-$\Delta$ shows the largest deviation ($0.00094$), which we
attribute to the network requiring more training epochs to precisely
locate the band width when starting without a WW prior. These results
validate the implementations and confirm that all models correctly
reduce to the frictionless case when $\lambda = 0$.

\FloatBarrier
\subsection{Writer and Buyer Indifference Prices}
\label{subsec:prices}

\begin{table}[H]
\centering
\caption{Writer indifference prices for all models across transaction
         cost levels. The BS reference applies only at $\lambda = 0$.}
\label{tab:writer_prices}
\begin{tabular}{lccccc}
\toprule
$\lambda$ & BS      & SC      & MLP     & NTBN-$\Delta$ & WW-NTBN \\
\midrule
0\%       & 0.07966 & 0.07995 & 0.07983 & 0.08060       & 0.07965 \\
0.1\%     & --      & 0.08121 & 0.08174 & 0.08131       & 0.08095 \\
0.5\%     & --      & 0.08507 & 0.08426 & 0.08326       & 0.08338 \\
1\%       & --      & 0.08932 & 0.08639 & 0.08597       & 0.08591 \\
5\%       & --      & 0.11750 & 0.09005 & 0.08980       & 0.08878 \\
\bottomrule
\end{tabular}
\end{table}

Table~\ref{tab:writer_prices} reports writer indifference prices across
all transaction cost levels. Several patterns emerge.

First, all models agree that prices increase monotonically with
$\lambda$, consistent with the theoretical prediction that higher
transaction costs raise the minimum compensation required by the writer.

Second, the SC prices increase substantially faster with $\lambda$ than
the DH prices. At $\lambda = 5\%$, SC gives a writer price of $0.11750$
while WW-NTBN gives $0.08878$, a gap of nearly $0.03$. This divergence
has several contributing factors. On the SC side, the binomial tree
discretisation and the restriction to trades of exactly $\pm\Delta y$
per step mean the numerical scheme can only approximate the true optimal
continuous-control solution, and this approximation error is likely to
grow with $\lambda$ as the optimal strategy becomes more sensitive to
the exact band boundaries. On the DH side, the networks are trained by
minimising the empirical entropic risk over a finite batch of simulated
paths, which means extreme scenarios that are rare in the training
distribution may not be adequately priced. The SC solution, by
considering every node of the binomial tree including those arising from
large price movements, produces more conservative prices that account
for such scenarios. Finally, the two approaches use fundamentally
different time discretisations (a binomial tree for SC and Monte
Carlo paths for DH), which can produce systematically different price
estimates even when both approximate the same continuous-time problem.
We note this divergence as an interesting open question; separating
the contribution of each factor would require a more refined numerical
study.

Third, among the DH models, WW-NTBN consistently produces the lowest
writer prices, followed closely by NTBN-$\Delta$ and then MLP. The gap
between WW-NTBN and MLP grows with $\lambda$: at $\lambda = 5\%$ the
difference is $0.00127$, suggesting that the WW structural prior becomes
increasingly valuable at higher friction levels.

\begin{figure}[H]
  \centering
  \includegraphics[width=\textwidth]{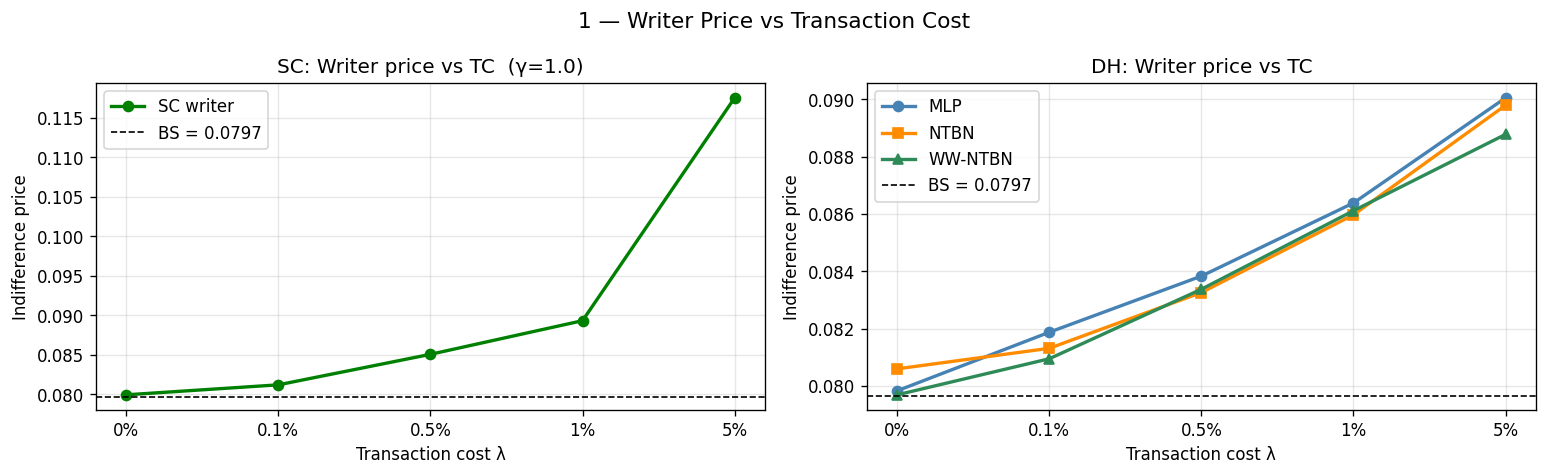}
  \caption{Writer indifference prices vs transaction cost level for
           all models. The BS reference price at $\lambda = 0$ is
           shown as a horizontal dashed line.}
  \label{fig:writer_prices}
\end{figure}

\begin{table}[H]
\centering
\caption{Writer and buyer indifference prices for SC and WW-NTBN,
         with resulting bid-ask spread.}
\label{tab:bidask}
\begin{tabular}{lcccccc}
\toprule
& \multicolumn{2}{c}{SC}
& \multicolumn{2}{c}{WW-NTBN}
& \multicolumn{2}{c}{Spread} \\
\cmidrule(lr){2-3}\cmidrule(lr){4-5}\cmidrule(lr){6-7}
$\lambda$ & Writer & Buyer & Writer & Buyer & SC     & WW-NTBN \\
\midrule
0\%       & 0.07995 & 0.07927 & 0.07969 & 0.07905 & 0.00068 & 0.00064 \\
0.1\%     & 0.08121 & 0.07802 & 0.08095 & 0.07696 & 0.00319 & 0.00399 \\
0.5\%     & 0.08507 & 0.07427 & 0.08338 & 0.07588 & 0.01080 & 0.00750 \\
1\%       & 0.08932 & 0.07037 & 0.08597 & 0.07272 & 0.01895 & 0.01325 \\
5\%       & 0.11750 & 0.04129 & 0.08878 & 0.07068 & 0.07621 & 0.01810 \\
\bottomrule
\end{tabular}
\end{table}

Table~\ref{tab:bidask} and Figure~\ref{fig:bidask} report writer and
buyer prices alongside the resulting bid-ask spread for SC and WW-NTBN.
A notable feature of the transaction cost framework is the appearance
of a bid-ask spread in indifference prices: even under the same model
and the same utility function, the writer demands more than the buyer
is willing to pay, because the two face different hedging problems.
The writer must hedge a short position, the buyer a long one, and
under proportional transaction costs these two hedging programmes have
different costs. Both approaches confirm the theoretical prediction of
Section~\ref{subsec:indifference}: the spread widens monotonically
with $\lambda$. The SC spread grows dramatically, reaching $0.07621$
at $\lambda = 5\%$, reflecting the conservative nature of the DP
solution under high frictions. The WW-NTBN spread grows more
moderately, reaching $0.01810$ at $\lambda = 5\%$, consistent with
the same factors discussed above for the writer prices.

\begin{figure}[H]
  \centering
  \includegraphics[width=\textwidth]{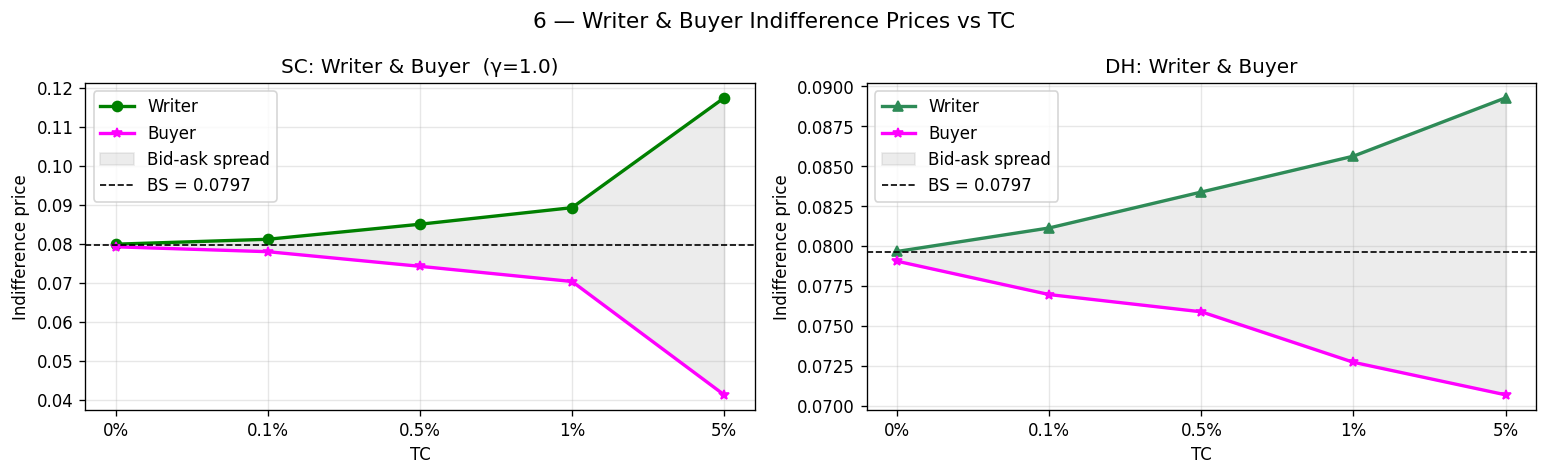}
  \caption{Writer and buyer indifference prices vs transaction cost
           level for SC (left) and WW-NTBN (right), illustrating the
           widening bid-ask spread.}
  \label{fig:bidask}
\end{figure}

\FloatBarrier
\subsection{No-Transaction Band Structure}
\label{subsec:bands}

Figure~\ref{fig:bands} compares the no-transaction band boundaries at
$t = 0.1$ extracted from the SC action map via \eqref{eq:sc_band_extract}
with those learned by NTBN-$\Delta$ and WW-NTBN, at $\lambda = 0.1\%$
and $\lambda = 1\%$ respectively. The bands are plotted as a function
of log-moneyness $x = \log(S/K)$, with the Black-Scholes delta shown
as a dashed reference.

At both TC levels, all models produce bands centred close to
$\Delta_{\mathrm{BS}}$, consistent with the stochastic control analysis.
WW-NTBN produces bands that match the SC boundaries more closely than
NTBN-$\Delta$ across the full range of log-moneyness, and this advantage
holds at both low and high TC. At low TC, the WW prior provides the
correct $\lambda^{1/3}$ bandwidth scaling from initialisation, so the
residual correction is small and the band rapidly converges to the SC
solution. At high TC, where the WW approximation is less exact, the
network learns a larger correction but still benefits from the structural
initialisation. NTBN-$\Delta$ must learn the bandwidth entirely from
data at all TC levels, requiring more training and introducing more
variance in the learned boundaries.

\begin{figure}[H]
  \centering
  \includegraphics[width=0.9\textwidth]{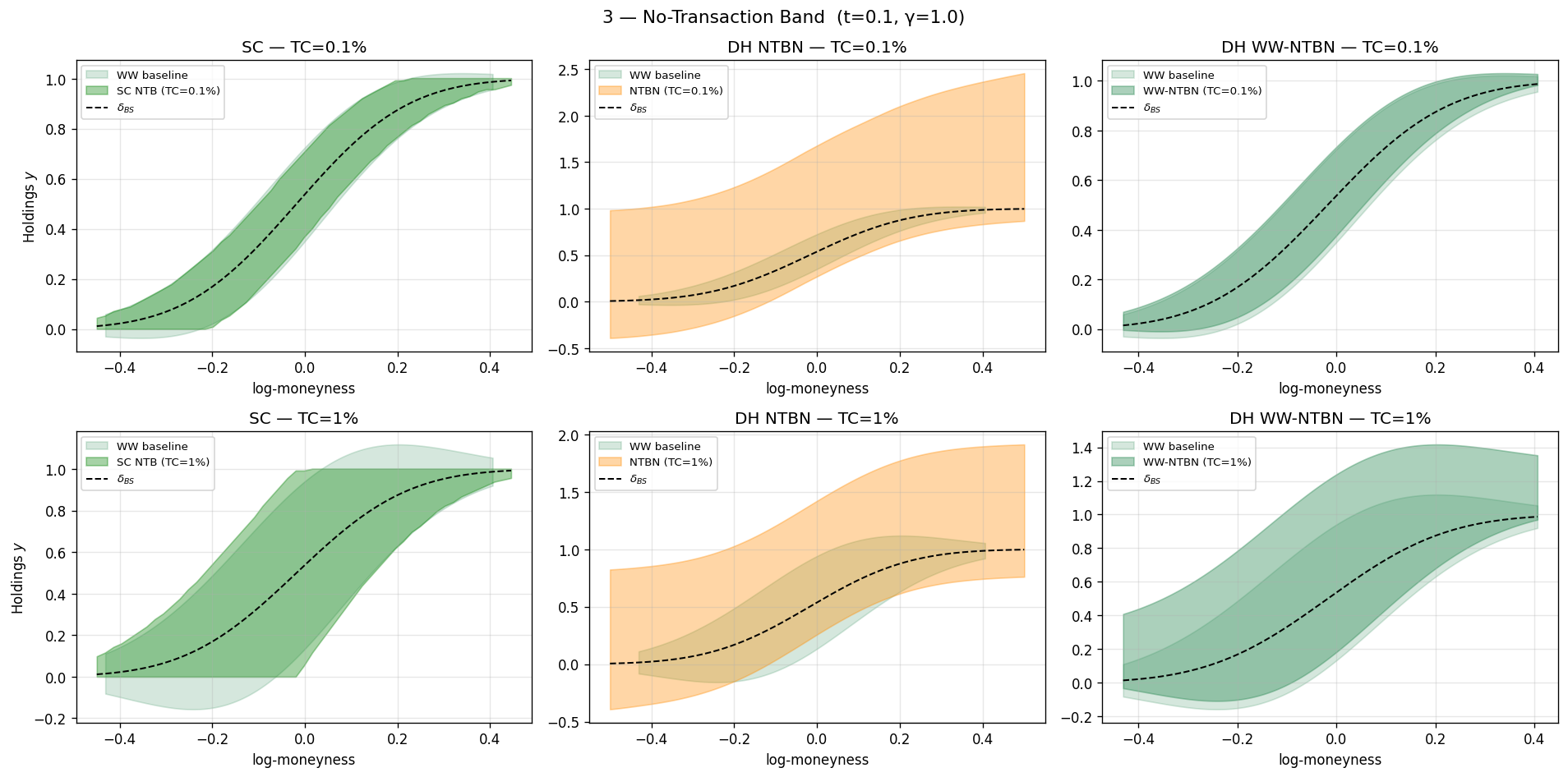}
  \caption{No-transaction band boundaries at $t = 0.1$ for
           $\lambda = 0.1\%$ and $\lambda = 1\%$.}
  \label{fig:bands}
\end{figure}

\FloatBarrier
\subsection{Training Convergence}
\label{subsec:convergence}

Figure~\ref{fig:loss} shows the entropic training loss as a function
of epoch for all three DH architectures at each TC level. WW-NTBN
converges to a lower final loss than both MLP and NTBN-$\Delta$ at all
TC levels and reaches this loss in fewer epochs. The advantage is most
pronounced at low TC ($\lambda \leq 0.5\%$), where the WW prior
provides an accurate initialisation.

The MLP consistently converges slowest and to the highest loss,
particularly at high TC where action-dependence introduces the most
instability. NTBN-$\Delta$ lies between the two: it benefits from the
delta-centring but must learn the bandwidth from scratch, requiring
more epochs than WW-NTBN but fewer than the MLP.

\begin{figure}[H]
  \centering
  \includegraphics[width=\textwidth]{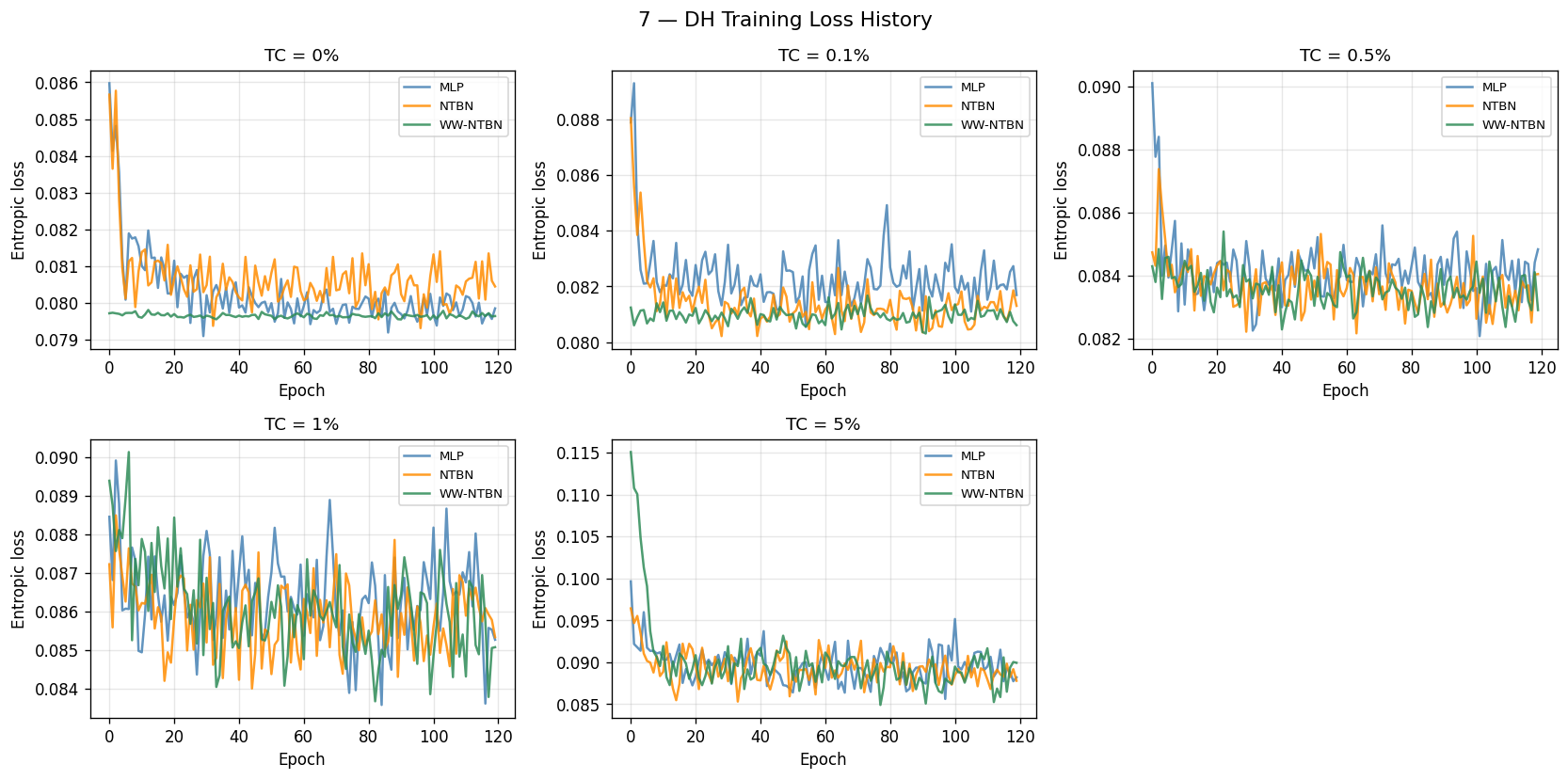}
  \caption{Entropic training loss vs epoch for MLP, NTBN-$\Delta$,
           and WW-NTBN at each of the five transaction cost levels.}
  \label{fig:loss}
\end{figure}

\FloatBarrier
\subsection{Terminal P\&L Distributions}
\label{subsec:pnl}

Figure~\ref{fig:pnl} shows the terminal P\&L distributions at
$\lambda = 0.5\%$ and $\lambda = 5\%$. All DH models share
approximately the same mean P\&L by construction of the indifference
pricing step. The SC solution achieves a slightly higher mean,
reflecting its conservative pricing.

The models differ more meaningfully in tail behaviour. SC exhibits a
worse CVaR than all three DH models, meaning its worst-case outcomes
are larger in magnitude. Among the DH models, WW-NTBN achieves the
best CVaR at both TC levels, followed by NTBN-$\Delta$ and then MLP.
This ordering is consistent with the band quality results of
Section~\ref{subsec:bands}: a more accurate no-transaction band reduces
unnecessary trading and limits the tail losses arising from excess
transaction costs.

\begin{figure}[H]
  \centering
  \includegraphics[width=0.68\textwidth]{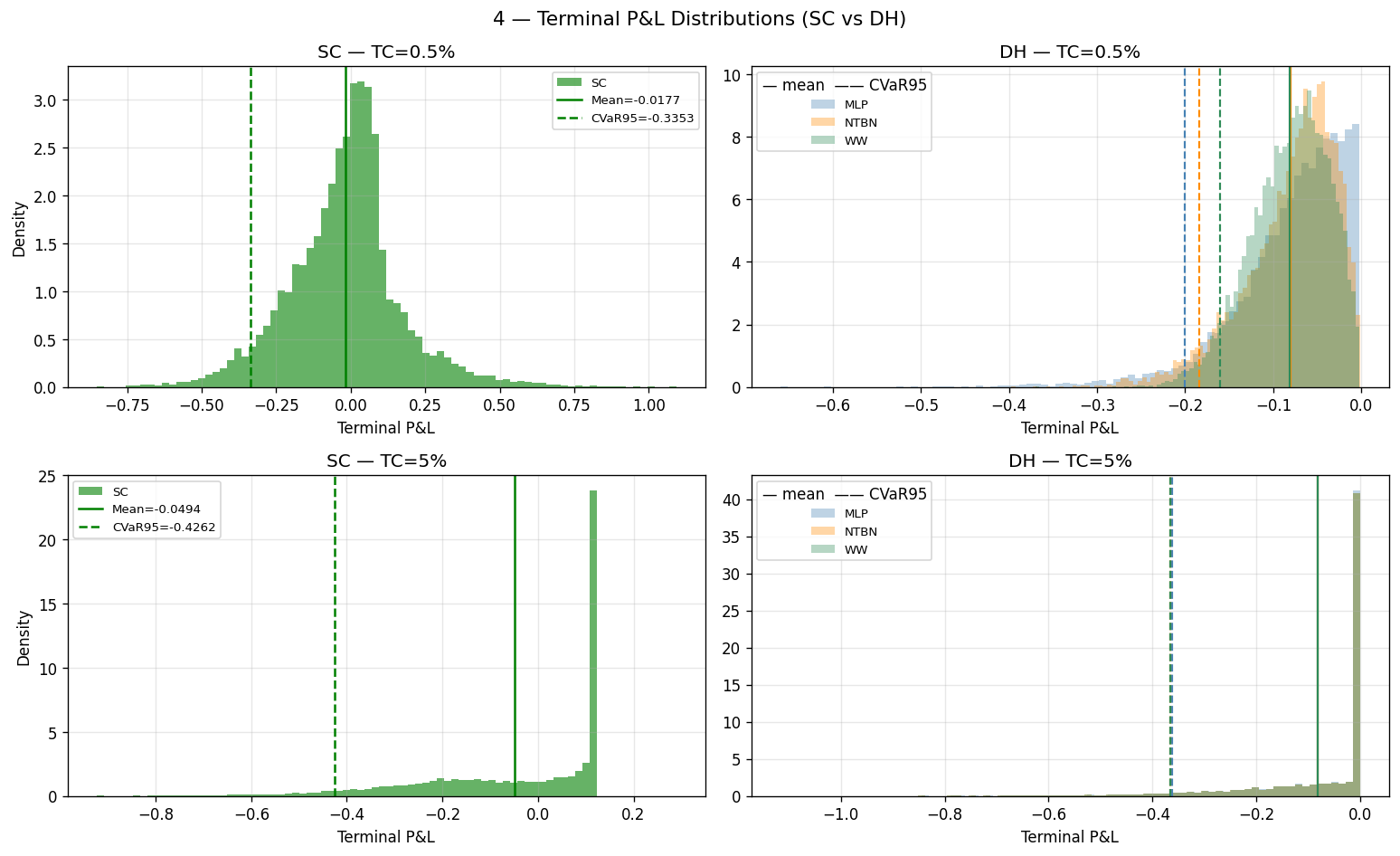}
  \caption{Terminal P\&L distributions for SC and DH models
           at $\lambda = 0.5\%$ and $\lambda = 5\%$ .}
  \label{fig:pnl}
\end{figure}

\FloatBarrier
\subsection{Trade Metrics}
\label{subsec:trademetrics}

To quantify the trading behaviour of each model, we introduce two
metrics. The \emph{trade frequency} (TF) measures the fraction of
nodes at which a rebalancing trade occurs,
\begin{equation}
  \mathrm{TF} = \frac{2}{(N+1)(N+2)}
  \sum_{t=1}^{N} \sum_{i=0}^{t-1}
  \sum_{j \in \{i, i+1\}}
  \mathbf{1}_{|y(t,j) - y(t-1,i)| > 0},
  \label{eq:tf}
\end{equation}
and the \emph{average shares traded} (AS) measures the average
absolute position change per node,
\begin{equation}
  \mathrm{AS} = \frac{2}{(N+1)(N+2)}
  \sum_{t=1}^{N} \sum_{i=0}^{t-1}
  \sum_{j \in \{i, i+1\}}
  |y(t,j) - y(t-1,i)|.
  \label{eq:as}
\end{equation}
Both metrics are computed over the binomial tree for SC and over
Monte Carlo paths for DH models.

Figure~\ref{fig:trademetrics} shows both metrics as a function of
$\lambda$. All models reduce trading frequency and position size
monotonically as $\lambda$ increases, consistent with the widening of
the no-transaction band. The band-based architectures (NTBN-$\Delta$
and WW-NTBN) trade significantly less frequently than the MLP at all
TC levels. The MLP must learn to reduce trading through the loss
function alone, which results in excess trading particularly at higher
TC levels.

The SC trade metrics should be interpreted with care. Because the
numerical scheme is restricted to trades of exactly $\pm\Delta y$
shares at each node, the SC strategy cannot execute the continuous
adjustments that the true optimal policy would make. This restriction
means the scheme naturally trades more frequently than a continuous
optimal policy would, and the average shares traded is mechanically
close to $\Delta y = 0.01$ per trade. Despite this limitation, the
qualitative behaviour is correct: both TF and AS decrease monotonically
with $\lambda$ for SC, confirming that the no-transaction band widens
as transaction costs increase, even if the absolute levels are
influenced by the discretisation.

\begin{figure}[H]
  \centering
  \includegraphics[width=\textwidth]{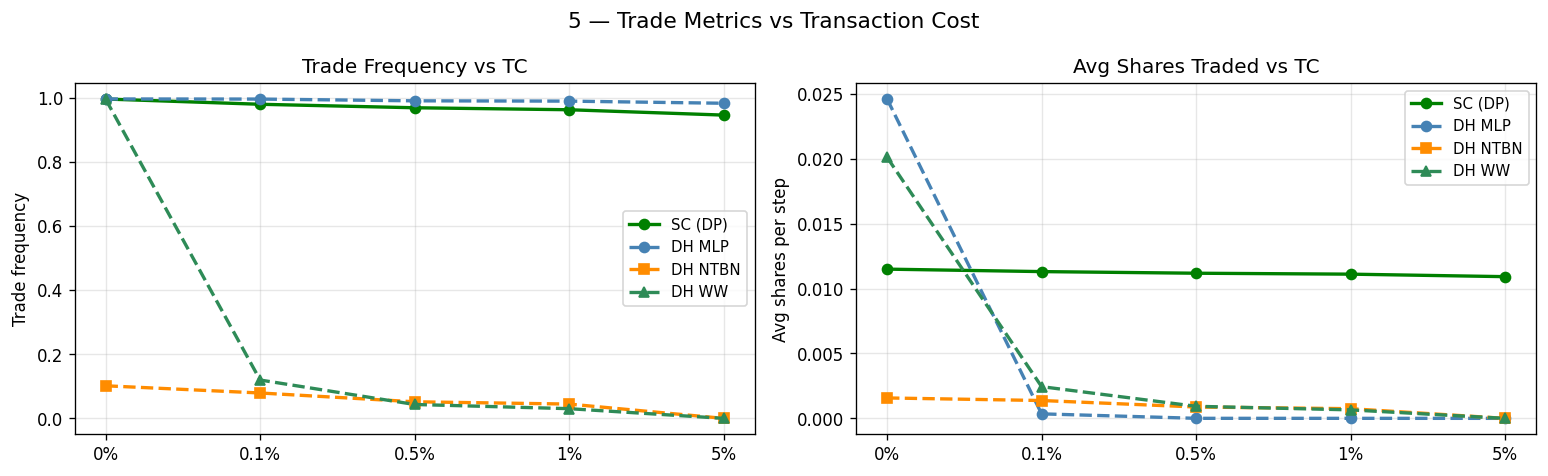}
  \caption{Trade frequency (left) and average shares traded (right)
           vs transaction cost level for all models.}
  \label{fig:trademetrics}
\end{figure}

\FloatBarrier
\subsection{Bull Call Spread: Pricing}
\label{subsec:cs_prices}

\begin{table}[H]
\centering
\caption{Bull call spread writer prices: joint vs naive hedging for
         SC and DH (WW-NTBN), with BS reference
         $p_{\mathrm{BS}} = 0.09324$.}
\label{tab:cs_prices}
\begin{tabular}{lcccc}
\toprule
$\lambda$ & SC Joint & SC Naive & DH Joint & DH Naive \\
\midrule
0\%       & 0.09324  & 0.09377  & 0.09352  & 0.09358  \\
0.1\%     & 0.09436  & 0.09400  & 0.09416  & 0.09745  \\
0.5\%     & 0.09503  & 0.09541  & 0.09524  & 0.10087  \\
1\%       & 0.09647  & 0.09733  & 0.09635  & 0.10564  \\
5\%       & 0.09677  & 0.11478  & 0.09677  & 0.11163  \\
\bottomrule
\end{tabular}
\end{table}

Table~\ref{tab:cs_prices} and Figure~\ref{fig:cs_prices} report joint
and naive writer prices for the bull call spread for both SC and DH.
At $\lambda = 0$, joint and naive prices are nearly identical for both
approaches, consistent with the linearity result of
Section~\ref{sec:callspread}. As $\lambda$ increases, the gap between
joint and naive prices widens substantially. At $\lambda = 5\%$, SC
naive reaches $0.11478$ while SC joint stays at $0.09677$, a difference
of $0.01801$; DH naive reaches $0.11163$ while DH joint is $0.09677$,
a difference of $0.01486$. This directly validates the subadditivity
argument of Section~\ref{sec:callspread}.

The SC joint price grows slowly with $\lambda$ and appears to approach
a ceiling around $0.097$. This reflects the structure of the spread
payoff: as the no-transaction band widens, the optimal strategy
requires increasingly less rebalancing, so the marginal hedging cost
of additional transaction costs diminishes. The DH joint price follows
the same pattern, confirming that both approaches capture the same
underlying phenomenon.

\begin{figure}[H]
  \centering
  \includegraphics[width=\textwidth]{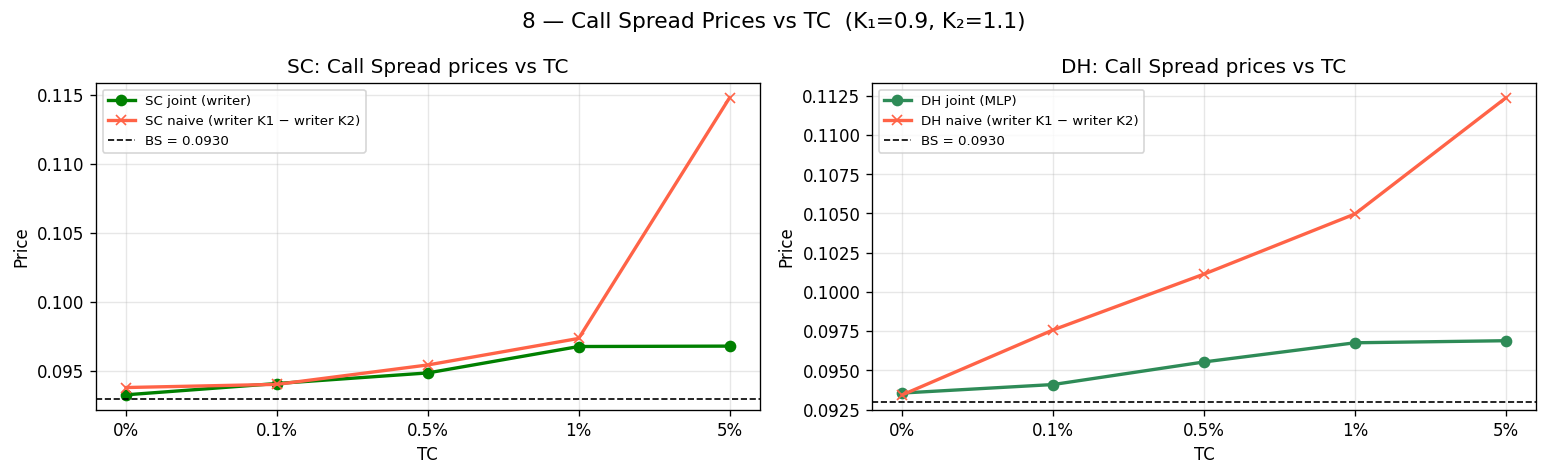}
  \caption{Bull call spread writer prices vs transaction cost level
           for joint and naive hedging strategies, under SC and DH.
           The BS reference price is shown as a dashed line.}
  \label{fig:cs_prices}
\end{figure}

For the remaining results in this section we focus on the stochastic
control solution, which provides the clearest structural illustration
of the phenomena of interest. The deep hedging results are qualitatively
consistent throughout.

\FloatBarrier
\subsection{Bull Call Spread: No-Transaction Band}
\label{subsec:cs_ntb}

Figure~\ref{fig:cs_band} compares the joint and naive no-transaction
bands at $t = 0.1$ for two TC levels. The joint band is centred around
the spread delta $\Delta_{\mathrm{BS}}(K_1) - \Delta_{\mathrm{BS}}(K_2)$
and widens substantially as $\lambda$ increases from $0.1\%$ to $1\%$,
consistent with the Whalley-Wilmott formula \eqref{eq:ww_band} and the
gamma cancellation argument of Section~\ref{sec:callspread}.

The naive band is the intersection of the writer band for the $K_1$
leg and the buyer band for the $K_2$ leg. This is the economically
correct decomposition: a trader hedging the two legs independently
would target $+\Delta_{\mathrm{BS}}(K_1)$ shares for the short $K_1$
call and $-\Delta_{\mathrm{BS}}(K_2)$ shares for the long $K_2$ call,
operating separate no-transaction bands around each target. The
intersection of these two bands in terms of the net holding $y$ is
the set of positions that simultaneously satisfies both individual
band constraints.

The figure reveals that this intersection is empty at both TC levels.
This is the direct consequence of the two individual bands being
centred at $+\Delta_{\mathrm{BS}}(K_1)$ and $-\Delta_{\mathrm{BS}}(K_2)$
respectively, which are on opposite sides of the spread delta near the
money. A naive hedger therefore has no region of inaction in the most
active part of the state space and is forced into near-continuous
rebalancing precisely where the spread's gamma exposure is largest.
The joint hedger, by contrast, maintains a wide band centred around
$\Delta_{\mathrm{BS}}(K_1) - \Delta_{\mathrm{BS}}(K_2)$ throughout,
trading only when the net holding drifts sufficiently far from the
spread delta. This contrast graphically explains both the higher trading
frequency and the higher indifference price of the naive approach.

\begin{figure}[H]
  \centering
  \includegraphics[width=\textwidth]{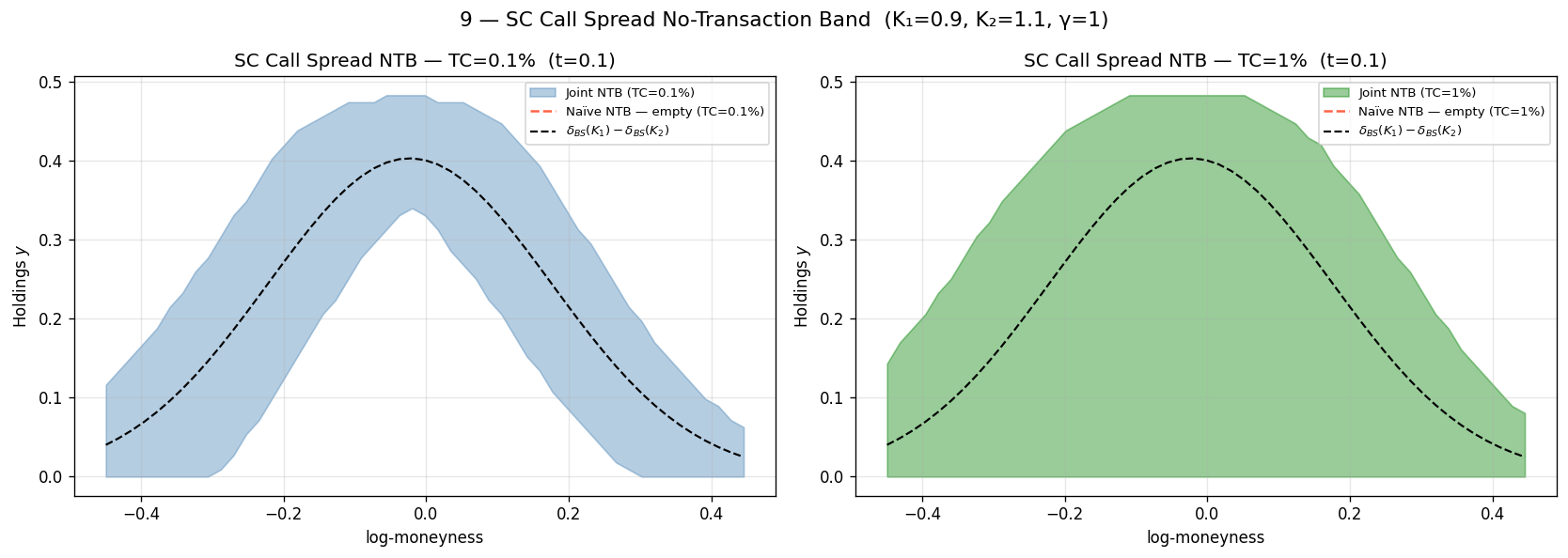}
  \caption{No-transaction bands for the jointly hedged bull call spread (blue/green) and the naive intersection band (red) at $t = 0.1$, for $\lambda = 0.1\%$ (left) and $\lambda = 1\%$ (right). The dashed line is the spread delta $\Delta_{\mathrm{BS}}(K_1) - \Delta_{\mathrm{BS}}(K_2)$.}
  \label{fig:cs_band}
\end{figure}

\FloatBarrier
\subsection{Bull Call Spread: P\&L and Trade Metrics}
\label{subsec:cs_trademetrics}

Figure~\ref{fig:cs_scatter} shows the terminal P\&L scatter at
$\lambda = 0.5\%$ for SC joint and naive hedging. We focus on this
TC level because it lies in the regime where transaction costs are
large enough to create a meaningful difference between the two
strategies, but below the level at which the joint policy switches to
near-inactivity, which would make the comparison less informative. The
joint strategy produces a tighter concentration of outcomes around the
mean, while the naive strategy exhibits wider dispersion and larger
outlier losses, consistent with the excess rebalancing that the empty
naive band forces near the money.

\begin{figure}[H]
  \centering
  \includegraphics[width=\textwidth]{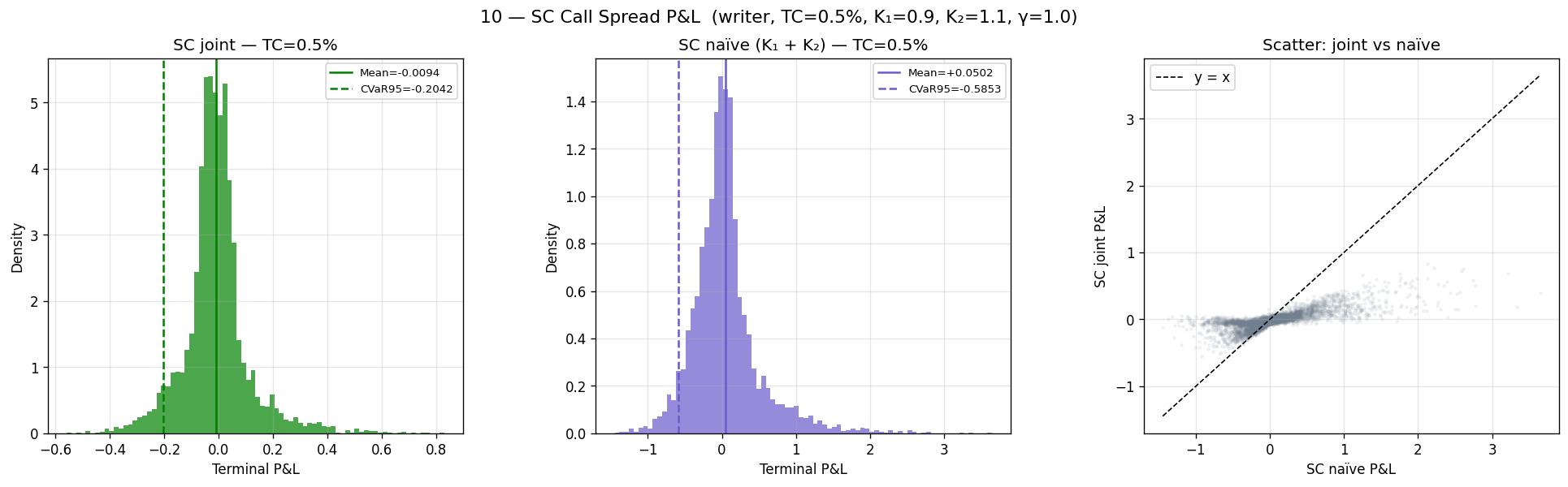}
  \caption{Terminal P\&L scatter at $\lambda = 0.5\%$ for SC joint
           and SC naive hedging of the bull call spread.}
  \label{fig:cs_scatter}
\end{figure}

Figure~\ref{fig:cs_trademetrics} confirms the mechanism quantitatively:
the joint strategy trades substantially less frequently than the naive
strategy at all TC levels, and the gap widens with $\lambda$. At low
TC levels the naive strategy trades nearly continuously, directly
reflecting the empty naive no-transaction band documented in
Section~\ref{subsec:cs_ntb}. This reduced trading activity under joint
hedging is the direct cause of the lower indifference prices documented
in Section~\ref{subsec:cs_prices}.

\begin{figure}[H]
  \centering
  \includegraphics[width=\textwidth]{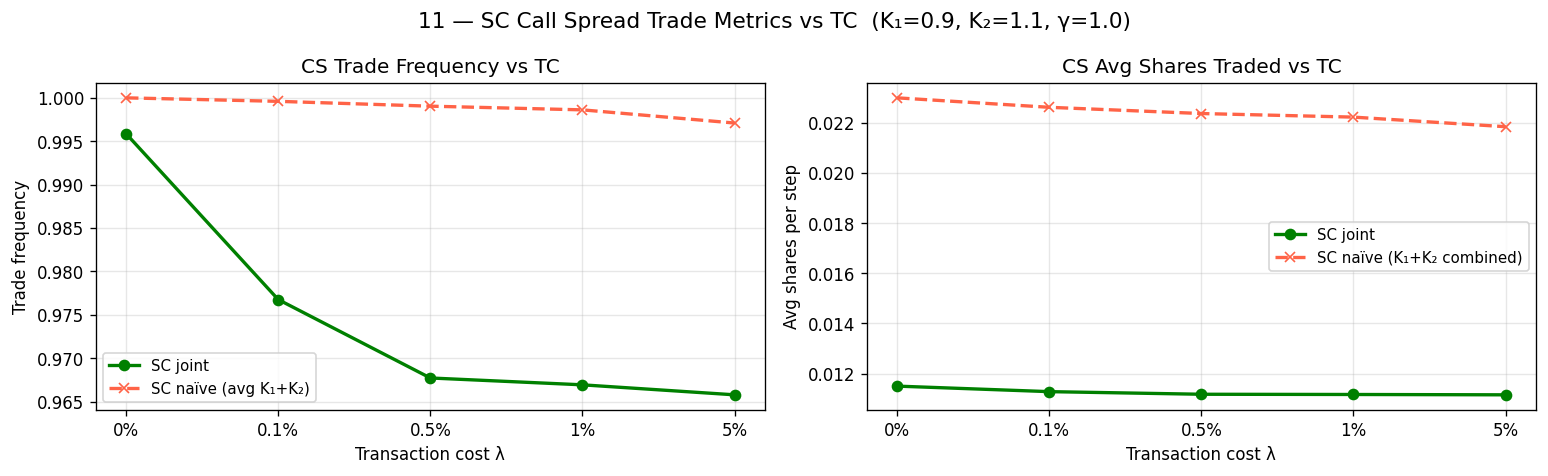}
  \caption{Trade frequency (left) and average shares traded (right)
           for joint and naive hedging of the bull call spread
           under SC, vs transaction cost level.}
  \label{fig:cs_trademetrics}
\end{figure}

\FloatBarrier
\subsection{Summary}
\label{subsec:summary}

\begin{table}[H]
\centering
\caption{Qualitative summary of model performance at $\lambda = 1\%$.}
\label{tab:summary}
\begin{tabular}{lcccc}
\toprule
Model          & Writer Price & Bid-Ask Spread & CVaR (95\%) & Band Quality \\
\midrule
SC             & 0.08932      & 0.01895        & Worst        & Reference    \\
MLP            & 0.08639      & --             & Moderate     & None         \\
NTBN-$\Delta$  & 0.08597      & --             & Good         & Moderate     \\
WW-NTBN        & 0.08591      & 0.01339        & Best         & Best         \\
\bottomrule
\end{tabular}
\end{table}

Table~\ref{tab:summary} provides a summary of model performance at
$\lambda = 1\%$. WW-NTBN achieves the best CVaR among DH models, the
tightest bid-ask spread, and the closest match to the SC
no-transaction band. SC provides the most conservative writer price
but the worst tail risk, reflecting its worst-case optimality
guarantees. The MLP, despite having no structural prior, produces
competitive prices but trades more and exhibits worse tail behaviour
than the band-based architectures. These results support the central
conclusion of this paper: embedding the structural knowledge from
stochastic control theory into the network architecture, as done in
WW-NTBN, improves both the quality of the learned hedging strategy and
its quantitative agreement with the analytically optimal solution.

\FloatBarrier

%% file: conclusion.tex

\section{Conclusion}
\label{sec:conclusion}

This paper has studied the problem of hedging and pricing European
options under proportional transaction costs from two complementary
perspectives, and shown that the two approaches are more deeply
connected than is typically appreciated in the literature.

On the stochastic control side, we derived the HJB quasi-variational
inequality governing the optimal strategy under CARA utility,
characterised the no-transaction band analytically, and presented the
Whalley-Wilmott asymptotic formula as a tractable approximation whose
accuracy we verified numerically. The CARA dimension reduction reduces
a four-dimensional PDE to a three-dimensional problem, making the
dynamic programming scheme computationally viable and enabling exact
indifference pricing via a single pair of backward recursions. The
bull call spread analysis showed that this framework immediately
reveals a phenomenon that is easy to miss in frictionless models: once
transaction costs are present, the price of a portfolio of derivatives
is strictly less than the sum of the individual prices when hedged
jointly, and the gap grows substantially with the transaction cost
level.

On the deep hedging side, the main contribution is the WW-NTBN
architecture, which uses the Whalley-Wilmott half-width as a structural
prior on the no-transaction bandwidth and replaces the hard clamp with
a differentiable soft clamp. By embedding the analytical result
directly into the network, WW-NTBN begins training with the correct
scaling in $\lambda$ and $\gamma$, converges faster than
both the MLP and NTBN-$\Delta$ baselines, and produces no-transaction
bands that quantitatively match the stochastic control solution across
all transaction cost levels tested. The intermediate architecture
NTBN-$\Delta$, which centres the band around the Black-Scholes delta
without imposing a prior on its width, provides a useful model-agnostic baseline when
the log-normal assumption is not appropriate.

The numerical comparison reveals a growing divergence between SC and
DH prices at high transaction costs that deserves comment. The gap
reflects several compounding factors: the SC numerical scheme is
restricted to trades of exactly $\pm\Delta y$ shares per step and uses
a binomial tree discretisation, both of which introduce approximation
errors that grow with $\lambda$; the DH networks are trained on
finite batches of simulated paths, which means rare extreme scenarios
may not be fully reflected in the learned policy; and the two approaches
use fundamentally different time discretisations, making direct
quantitative comparison imprecise. Neither approach is uniformly
superior. SC provides rigorous optimality guarantees and exact
indifference prices under its model assumptions, while deep hedging is
more flexible, scales to complex payoffs and portfolios, and does not
require solving a high-dimensional PDE. WW-NTBN occupies a deliberate
middle ground: it sacrifices some of the model-agnosticism of pure
deep hedging in exchange for the structural accuracy of stochastic
control, a trade-off that is most valuable when the transaction cost
level is moderate and the log-normal assumption is reasonable.

Several directions suggest themselves for future work. The most natural
extension is to apply WW-NTBN beyond the GBM setting, replacing the
Black-Scholes gamma with a model-specific hedging sensitivity under
stochastic volatility or jump-diffusion dynamics, where the qualitative
structure of the no-transaction band is preserved but the frictionless
optimum changes. A second direction is to extend the joint hedging
framework to larger portfolios of derivatives, where the subadditivity
of indifference pricing has potentially significant practical
implications for how dealers price and risk-manage structured products.
Finally, the WW prior could serve as a warm start for reinforcement
learning approaches to hedging, providing a structured initialisation
that reduces the exploration cost in the early stages of training.

%% file: appendix.tex

\appendix

\section{CARA Indirect Utility Derivation}
\label{app:cara_derivation}

\subsection{Without Option Position}

We verify the explicit form of the value function under CARA utility
$u(W) = 1 - e^{-\gamma W}$ in the frictionless case without an option.
We make the ansatz
\begin{equation}
  U(t, W) = 1 - e^{-\gamma \varphi(t) W + \psi(t)},
\end{equation}
and substitute into the HJB equation \eqref{eq:hjb_frictionless}. The
required partial derivatives are
\begin{align}
  U_t    &= \bigl(\gamma \varphi'(t) W - \psi'(t)\bigr)
             e^{-\gamma\varphi(t)W + \psi(t)}, \\
  U_W    &= \gamma\varphi(t)\, e^{-\gamma\varphi(t)W + \psi(t)}, \\
  U_{WW} &= -\gamma^2\varphi(t)^2\, e^{-\gamma\varphi(t)W + \psi(t)}.
\end{align}
Substituting the optimal Merton fraction \eqref{eq:merton} into the
HJB equation and dividing through by
$e^{-\gamma\varphi(t)W+\psi(t)} \neq 0$, we obtain
\begin{equation}
  \gamma\varphi'(t)W - \psi'(t)
  + \gamma r \varphi(t) W
  + \frac{(\mu-r)^2}{2\sigma^2} = 0.
\end{equation}
Matching coefficients in $W$ and in the constant term separately gives
the system
\begin{align}
  \varphi'(t) + r\varphi(t) &= 0, \label{eq:phi_ode} \\
  \psi'(t) &= \frac{(\mu-r)^2}{2\sigma^2}. \label{eq:psi_ode}
\end{align}
With boundary conditions $\varphi(T) = 1$ and $\psi(T) = 0$, the
solutions are
\begin{equation}
  \varphi(t) = e^{r(T-t)}, \qquad
  \psi(t) = \frac{(\mu-r)^2}{2\sigma^2}(t - T),
\end{equation}
giving the value function \eqref{eq:cara_value}. The optimal fraction
\eqref{eq:cara_merton} follows immediately from \eqref{eq:merton}.

\subsection{Dimension Reduction under Transaction Costs}

We prove Proposition~\ref{prop:dim_reduction}. The argument proceeds in
three steps: we first express $X_T$ explicitly as a function of $X_t$
and the controls, then use the CARA structure to factor out the $X_t$
dependence from the value function, and finally derive the reduced
quasi-variational inequality for $Q$.

\paragraph{\small Step 1: Solution of the cash dynamics.}
The cash balance satisfies $dX_t = rX_t\,dt - (1+\lambda)S_t l_t\,dt
+ (1-\lambda)S_t m_t\,dt$. Solving this linear SDE gives
\begin{equation}
  X_T = \frac{X_t}{\delta(t,T)}
      + \int_t^T \frac{1}{\delta(s,T)}
        \bigl[(1-\lambda)S_s\,dm_s - (1+\lambda)S_s\,dl_s\bigr],
  \label{eq:XT_explicit}
\end{equation}
where $\delta(t,T) = e^{-r(T-t)}$. The first term is the initial cash
balance compounded at the risk-free rate. The integral collects the
discounted net proceeds from all trading between $t$ and $T$. Crucially,
only the first term depends on $X_t$; the integral depends on the
controls $(l,m)$ and the stock price path, but not on the initial cash
balance.

\paragraph{\small Step 2: Factorisation of the terminal utility.}
Substituting \eqref{eq:XT_explicit} into the terminal utility
$u(X_T + y_T S_T - \lambda S_T|y_T| - \varphi(S_T))$ and writing
$\mathcal{G}^{l,m}$ for all terms in the argument that do not depend
on $X_t$,
\begin{align}
  &u\bigl(X_T + y_T S_T - \lambda S_T|y_T| - \varphi(S_T)\bigr)
  \notag \\
  &\quad= 1 - \exp\!\left(-\gamma\left(
      \frac{X_t}{\delta(t,T)} + \mathcal{G}^{l,m}
    \right)\right)
  \notag \\
  &\quad= 1 - e^{-\gamma X_t/\delta(t,T)} \cdot e^{-\gamma \mathcal{G}^{l,m}},
  \label{eq:utility_factored}
\end{align}
where the last step uses $e^{-(a+b)} = e^{-a} \cdot e^{-b}$. This is
the key step: the CARA utility converts the additive separation of $X_t$
in the terminal wealth into a multiplicative separation in the utility.
The factor $e^{-\gamma X_t/\delta(t,T)}$ is $\mathcal{F}_t$-measurable
and strictly positive, so it is a constant with respect to both the
conditional expectation and the supremum over future controls $(l,m)$.

Taking the supremum over $(l,m)$ and the conditional expectation,
\begin{align}
  U(t,S,y,X)
  &= \sup_{(l,m)} \mathbb{E}_{t,S,y,X}\!\left[
      1 - e^{-\gamma X_t/\delta(t,T)} \cdot e^{-\gamma \mathcal{G}^{l,m}}
    \right]
  \notag \\
  &= 1 - e^{-\gamma X/\delta(t,T)}
    \inf_{(l,m)} \mathbb{E}_{t,S,y}\!\left[e^{-\gamma \mathcal{G}^{l,m}}\right],
  \label{eq:U_factored}
\end{align}
where the $\sup$ over $(l,m)$ becomes an $\inf$ because maximising
$1 - e^{-\gamma X/\delta} \cdot Z$ over $(l,m)$ is equivalent to
minimising $Z = e^{-\gamma\mathcal{G}^{l,m}}$ over $(l,m)$, since the
prefactor $e^{-\gamma X/\delta} > 0$ is independent of the controls.
Setting
\begin{equation}
  Q(t, y, S) = \inf_{(l,m) \geq 0}
  \mathbb{E}_{t,y,S}\!\left[
    \exp\!\left(-\gamma\!\int_t^T
      \frac{(1-\lambda)S_s\,dm_s - (1+\lambda)S_s\,dl_s}{\delta(s,T)}
    \right)
    e^{-\gamma(y_T S_T - \lambda S_T|y_T| - \varphi(S_T))}
  \right],
\end{equation}
equation \eqref{eq:U_factored} gives the factorisation
$U(t,S,y,X) = 1 - e^{-\gamma X/\delta(t,T)} Q(t,y,S)$.
Since the infimum is taken over $(t,y,S)$ alone with no reference to
$X$, the function $Q$ does not depend on $X$.

\paragraph{\small Step 3: Derivation of the reduced quasi-variational inequality.}
We now substitute the factorised form $U = 1 - e^{-\gamma X/\delta}Q$
into the quasi-variational inequality $\max\{A, B, C\} = 0$ and derive the
equation satisfied by $Q$. The required partial derivatives are
\begin{align}
  U_X  &= \frac{\gamma}{\delta(t,T)}\, e^{-\gamma X/\delta}\, Q,
  \label{eq:UX} \\
  U_y  &= -e^{-\gamma X/\delta}\, Q_y,
  \label{eq:Uy} \\
  U_S  &= -e^{-\gamma X/\delta}\, Q_S,
  \label{eq:US} \\
  U_{SS} &= -e^{-\gamma X/\delta}\, Q_{SS},
  \label{eq:USS} \\
  U_t  &= -e^{-\gamma X/\delta}
           \!\left(Q_t - \frac{\gamma r X}{\delta(t,T)}\,Q\right).
  \label{eq:Ut}
\end{align}
Substituting into $A = U_y - (1+\lambda)S\,U_X$:
\begin{equation}
  A = -e^{-\gamma X/\delta}\!\left(
    Q_y + \frac{\gamma(1+\lambda)S}{\delta(t,T)}\,Q
  \right).
\end{equation}
Substituting into $B = -U_y + (1-\lambda)S\,U_X$:
\begin{equation}
  B = -e^{-\gamma X/\delta}\!\left(
    -Q_y - \frac{\gamma(1-\lambda)S}{\delta(t,T)}\,Q
  \right).
\end{equation}
Substituting into $C = -U_t - \mu S U_S - \frac{1}{2}\sigma^2 S^2
U_{SS} - rX U_X$, and noting that the $rXU_X$ term cancels the
$\gamma rX/\delta \cdot Q$ term arising from $U_t$:
\begin{equation}
  C = -e^{-\gamma X/\delta}\!\left(
    -Q_t - \mu S Q_S - \tfrac{1}{2}\sigma^2 S^2 Q_{SS}
  \right).
\end{equation}
In each case $A$, $B$, $C$ equal $-e^{-\gamma X/\delta}$ times a
function of $(t,y,S)$ alone. Since $-e^{-\gamma X/\delta} < 0$, the
condition $\max\{A, B, C\} = 0$ is equivalent to
\begin{equation}
  \min\!\left\{
    Q_y + \frac{\gamma(1+\lambda)S}{\delta(t,T)}\,Q,\;
    -Q_y - \frac{\gamma(1-\lambda)S}{\delta(t,T)}\,Q,\;
    -Q_t - \mu S Q_S - \tfrac{1}{2}\sigma^2 S^2 Q_{SS}
  \right\} = 0,
\end{equation}
which is the reduced quasi-variational inequality \eqref{eq:Q_vi}. The
terminal condition $Q(T,y,S) = \exp(-\gamma(yS - \lambda S|y| -
\varphi(S)))$ follows directly from \eqref{eq:terminal} and the
factorisation. This completes the proof of Proposition~\ref{prop:dim_reduction}.


\section{Sketch of the Whalley\texorpdfstring{\textendash}{}Wilmott Asymptotic Derivation}
\label{app:ww_derivation}

We outline the logic of the matched asymptotic expansion of \citet{WhalleyWilmott1997}, which reduces the three-dimensional free boundary problem for $Q$ to the closed-form bandwidth formula~\eqref{eq:ww_band}. The emphasis is on the structure of the argument rather than the algebraic details, which can be found in Section~3 of the original paper.

\subsection*{Setting}

The reduced variational inequality~\eqref{eq:Q_vi} partitions the $(S,y)$ plane at each time~$t$ into three regions: buy, sell, and no-transaction. In the buy and sell regions, the active conditions $A=0$ and $B=0$ are first-order in $y$, so for each fixed $(S,t)$ they can be integrated as ODEs in~$y$. This gives the \emph{exact} general solutions
\begin{align}
  Q^{\mathrm{buy}}  &= \exp\!\Bigl(-\tfrac{\gamma(1+\lambda) S\, y}{\delta} + H^-(S,t;\lambda)\Bigr), \label{eq:buy-sol}\\[3pt]
  Q^{\mathrm{sell}} &= \exp\!\Bigl(-\tfrac{\gamma(1-\lambda) S\, y}{\delta} + H^+(S,t;\lambda)\Bigr), \label{eq:sell-sol}
\end{align}
where $\delta = e^{-r(T-t)}$ and $H^\pm$ are undetermined functions of~$(S,t)$. Both solutions are \emph{exponential-linear} in~$y$: the $y$-dependence of $\log Q$ is exactly $-\gamma(1\pm\lambda)Sy/\delta$.

In the no-transaction region, $Q$ satisfies the full PDE $Q_t + \mu S Q_S + \tfrac{1}{2}\sigma^2 S^2 Q_{SS} = 0$, and must be joined to~\eqref{eq:buy-sol}\,--\,\eqref{eq:sell-sol} at the unknown free boundaries with $Q$, $Q_y$, and $Q_{yy}$ all continuous.

\subsection*{The expansion in the no-transaction region}

It is natural to work with $\log Q$ rather than $Q$ itself. Since the no-transaction region is a narrow band around the frictionless optimum $y^*(S,t)$, we introduce a rescaled variable $Y$ defined by $y = y^* + \lambda^{1/3} Y$, so that $Y = O(1)$ inside the band. The leading $y$-dependent term $-\gamma Sy^*/\delta$ in the expansion below is forced by continuity: both~\eqref{eq:buy-sol} and~\eqref{eq:sell-sol} contain $-\gamma Sy/\delta$ at $\lambda = 0$, and since $y = y^* + \lambda^{1/3}Y$ this splits into $-\gamma Sy^*/\delta$ at $O(1)$ and $-\lambda^{1/3}\gamma SY/\delta$ at $O(\lambda^{1/3})$. We then write
\[
  \log Q = -\frac{\gamma S\, y^*}{\delta} + H_0(S,t) \;-\; \lambda^{1/3}\frac{\gamma S Y}{\delta} \;+\; \lambda^{1/3} H_1(S,t) \;+\; \lambda^{2/3} H_2(S,t) \;+\; \lambda\, H_3(S,t) \;+\; \cdots
\]
where terms depending on both $Y$ and $(S,t)$ also appear at higher orders (starting at $O(\lambda^{4/3})$). The exponent $1/3$ is not chosen arbitrarily: it is the unique value for which the interior PDE and the boundary matching contribute at the same asymptotic order, as can be verified by tracking powers of $\lambda$ through the expansion (see \citet{WhalleyWilmott1997}, Section~3, for the detailed calculation). An economic interpretation of this scaling is given at the end of this appendix.

The functions $H_0, H_1, H_2, \ldots$ and the higher-order $Y$-dependent terms are determined by substituting the expansion into the no-transaction PDE and collecting powers of $\lambda^{1/3}$. A key subtlety is that the PDE is written in derivatives at fixed $y$, not fixed $Y$. Since $Y$ depends on $S$ and $t$ through $y^*$, the chain rule introduces factors of $y^*_S$ (the option's gamma) whenever $S$-derivatives act on $Y$-dependent terms, coupling different orders of the expansion.

\subsection*{What each order determines}

\noindent\textbf{$\boldsymbol{O(1)}$.}\quad
Determines $H_0(S,t)$ via a PDE relating it to $y^*$.

\medskip\noindent\textbf{$\boldsymbol{O(\lambda^{1/3})}$.}\quad
This order contains terms proportional to $Y$ and terms independent of $Y$; each group must vanish separately. The $Y$-dependent part yields the frictionless optimal holding:
$y^*(S,t) = \partial V_0/\partial S + \delta(\mu-r)/(\gamma S\sigma^2)$,
where $V_0 = (\delta/\gamma) H_0$ satisfies the Black\textendash Scholes equation.
The $Y$-independent part gives an equation for $H_1$, which turns out to be identically zero: the first non-trivial correction is at $O(\lambda^{2/3})$, not $O(\lambda^{1/3})$.

\medskip\noindent\textbf{$\boldsymbol{O(\lambda^{2/3})}$: the bandwidth.}\quad
This is the critical order. Through the chain-rule mechanism described above, $S$-derivatives acting on higher-order $Y$-dependent terms bring them down to this order: specifically, the second $S$-derivative of a $Y$-dependent function at $O(\lambda^{4/3})$ in $\log Q$ picks up a factor $\lambda^{-2/3}$ from two applications of $y^*_S \cdot \lambda^{-1/3}\partial/\partial Y$, contributing at $O(\lambda^{2/3})$ in the PDE. This yields an ODE in $Y$ with an inhomogeneous term proportional to $Y^2$, whose solution is a polynomial in~$Y$ with coefficients depending on $H_2(S,t)$ and on $y^*_S = \partial y^*/\partial S$.

The free constants and the band boundaries $Y^+$, $Y^-$ are pinned down by the matching conditions:
\begin{itemize}[leftmargin=2em]
\item \textbf{Continuity of $Q_y$}: the slope $\partial(\log Q)/\partial y$ must match the buy/sell solutions at each boundary.
\item \textbf{Continuity of $Q_{yy}$}: since $\log Q$ is linear in $y$ in the buy/sell regions, $(\log Q)_{yy}=0$ there, forcing the second $Y$-derivative of the polynomial to vanish at both boundaries.
\end{itemize}
These conditions together force (i) the band to be \emph{symmetric}: $Y^+ = Y^-$, and (ii) the half-width:
\begin{equation}\label{eq:Yplus}
  Y^+ = \left(\frac{3\, S\, \delta\, (y^*_S)^2}{2\gamma}\right)^{\!1/3}.
\end{equation}
The actual half-width in the original $y$-variable is $h = \lambda^{1/3} Y^+$.

\subsection*{The final formula}

With the option liability present and $\mu = r$, the speculative term in $y^*$ vanishes and $y^*_S = \Gamma_{\mathrm{BS}}(t,S)$. Substituting into~\eqref{eq:Yplus}:
\[
  h_{\mathrm{WW}}(t,S) = \left(\frac{3\,\lambda\, \delta(t,T)\, S\, \Gamma_{\mathrm{BS}}(t,S)^2}{2\gamma}\right)^{\!1/3},
\]
which is the formula~\eqref{eq:ww_band}.

\subsection*{Why $\lambda^{1/3}$: an economic interpretation}

The investor balances two costs. The \emph{suboptimality cost} of holding $y \neq y^*$ is quadratic in the deviation~$h$ (by concavity of the value function), contributing a running cost of order~$h^2$ per unit time. The \emph{transaction cost} of rebalancing is $\lambda$ per unit traded, but a wider band is hit less often: the expected cost per unit time is of order~$\lambda/h$. The total running cost is therefore $Ah^2 + B\lambda/h$, minimised at $h \propto \lambda^{1/3}$, with minimum value scaling as $\lambda^{2/3}$, which is why the leading correction to the option price is at order~$\lambda^{2/3}$, not~$\lambda$.

\begin{remark}
The full derivation, including the smoothness conditions, the explicit polynomial form of the $Y$-dependent terms, and the inhomogeneous Black\textendash Scholes equation satisfied by the $O(\lambda^{2/3})$ price correction, is given in Section~3 of \citet{WhalleyWilmott1997}.
\end{remark}
 

\section{Soft Clamp Gradient}
\label{app:softclamp}

We verify that the soft clamp \eqref{eq:softclamp} provides non-zero
gradients everywhere. Recall that
\begin{equation}
  \mathrm{softclamp}(x, \ell, u)
  = m + h \cdot \tanh\!\left(\beta \cdot \frac{x - m}{h}\right),
\end{equation}
where $m = (\ell + u)/2$ and $h = (u - \ell)/2 > 0$. The derivative
with respect to $x$ is
\begin{equation}
  \frac{\partial}{\partial x}\mathrm{softclamp}(x, \ell, u)
  = \beta \left(1 - \tanh^2\!\left(\beta \cdot
    \frac{x - m}{h}\right)\right)
  = \beta \operatorname{sech}^2\!\left(\beta \cdot
    \frac{x-m}{h}\right).
  \label{eq:softclamp_grad}
\end{equation}
Since $\operatorname{sech}^2(\cdot) > 0$ everywhere and $\beta > 0$,
the gradient \eqref{eq:softclamp_grad} is strictly positive for all
$x \in \mathbb{R}$. In particular, when $x \ll \ell$ or $x \gg u$,
the gradient is small but non-zero, of order
$4\beta e^{-2\beta|x-m|/h}$, providing a learning signal even when
the previous hedge is far outside the band.

By contrast, the hard clamp $\mathrm{clamp}(x, \ell, u)$ has zero
gradient for $x < \ell$ and $x > u$, which removes the gradient
signal for any trajectory where the position exits the band. As
$\beta \to \infty$, the soft clamp converges pointwise to the hard
clamp while maintaining non-zero gradients for all finite $\beta$.
In our experiments we use $\beta = 10$, which provides a close
approximation to the hard clamp within the band while preserving
meaningful gradient flow outside it.